\numberwithin{equation}{section}
\theoremstyle{plain}     
\newtheorem{theorem}{Theorem}
\newtheorem{lemma}{Lemma}
\newtheorem{proposition}{Proposition}
\theoremstyle{definition} 
\newtheorem{mechanism}{Mechanism}
\theoremstyle{remark} 
\def\section{\@startsection {section}{1}{\z@}{-3.5ex plus -1ex minus
 -.2ex}{2.3ex plus .2ex}{\large\bf}}
\def\bfm#1{\mbox{\boldmath$#1$}}
\def\0{\bfm 0}
\DeclareMathAlphabet{\mathpzc}{OT1}{pzc}{m}{it}
\newcounter{my}
\newcounter{my2}
\newcounter{my3}
\newcounter{my4}
\newcounter{my5}
\newcounter{my6}
\begin{document}

\title{Mechanism Design for Extending the Accessibility of Facilities}


\date{}
\maketitle

\vspace{-3em}
\begin{center}

\author{Hau Chan$^{1}$\quad Jianan Lin$^{2}$\quad Chenhao Wang $^{3,4}$ \quad Yanxi Xie$^{5}$\\
${}$\\
$1$ University of Nebraska-Lincoln\\
$2$ Rensselaer Polytechnic Institute\\
$3$ Beijing Normal University-Zhuhai chenhwang@bnu.edu.cn\\
$4$ BNU-HKBU United International College\\
$5$ School of Artificial Intelligence, Beijing Normal University\\
\medskip
}

\end{center}

\begin{abstract}
We study a variation of facility location problems (FLPs) that aims to improve the accessibility of agents to the facility within the context of mechanism design without money. 
In such a variation, agents have preferences on the ideal locations of the facility on a real line, and the facility’s location is fixed in advance where (re)locating the facility is not possible due to various constraints (\(e.g.,\) limited space and construction costs). 
To improve the accessibility of agents to facilities, existing mechanism design literature in FLPs has proposed to structurally modify the real line (e.g., by adding a new interval) or provide shuttle services between two points when structural modifications are not possible. 
In this paper, we focus on the latter approach and propose to construct an accessibility range to extend the accessibility of the facility. 
In the range, agents can receive accommodations (e.g., school buses, campus shuttles, or pickup services) to help reach the facility. 
Therefore, the cost of each agent is the distance from their ideal location to the facility (possibility) through the range.
We focus on designing strategyproof mechanisms that elicit true ideal locations from the agents and construct accessibility ranges (intervals) to approximately minimize the social cost or the maximum cost of agents. 
For both social and maximum costs, we design group strategyproof mechanisms with asymptotically tight bounds on the approximation ratios. 
\end{abstract}

\section{Introduction}

In recent years, facility location problems (FLPs) and their variants \cite{chan2021mechanismsurvey} have been extensively studied within the context of approximate mechanism design without money \cite{aziz2020facility, dokow2012mechanism, procaccia2013approximate, todo2011false}. 
In the most typical setting of FLPs, a social planner is tasked with locating a facility (\(e.g.,\) a park, hospital, or school) in a metric place (\(e.g.,\) a real line) to serve a set of agents, who have preferences on the ideal locations of the facility, and minimize the distance of the facility to agent ideal locations. 

Because of the strategic behavior of the agents, the agents may have the incentive to misreport their ideal locations to manipulate the facility location to be closer to their ideal locations.
Therefore, the main focus of the study of mechanism design for FLPs is in designing strategyproof mechanisms to incentivize the agents to report their locations truthfully while simultaneously determining the locations of a facility that (approximately) minimizes a given cost objective (\(e.g.,\) minimizing the total/social or maximum distance of the agents to the facility).

\paragraph{Improving the Accessibility to Facilities.}
In many real-world scenarios, facilities (\(e.g.,\) parks, hospitals, or schools) have already been built for many years or decades. 
Because of the changes in local demographics and social planner objectives, the locations of existing facilities may no longer be ideal and could lead to accessibility concerns for the agents. 
One straightforward approach to improve the accessibility of the facility is to "ignore" the existing facility and (re)locate a new facility tailored to the agents. 
However,  such an approach can be impractical because the facility cannot be easily (re)located due to various constraints (e.g., construction costs, limited space, or legal regulations).

To improve the accessibility of agents to facilities, recent mechanism design studies in FLPs \cite{chan2023aamas,chan2024aamas,Zining2024TAMC} have considered variations of FLPs on real lines and proposed two complementary approaches. 
These approaches are to either structurally modify the real lines \cite{chan2024aamas,Zining2024TAMC} or provide free accommodation services between two points when structural modifications are not possible \cite{chan2023aamas,chan2024aamas}. 
In the former approach \cite{chan2024aamas,Zining2024TAMC}, to structurally modify the real lines, the social planner can construct new edges (such as roadways or bridges) between two points so that agents can use the new edges to access the facility more efficiently. 
In the latter approach \cite{chan2023aamas,chan2024aamas}, the accommodation services can refer to some form of free shuttle services between two fixed pickup/drop-off points (via a cost-free short-cut edge) within the real line to decrease the distance agents need to reach the facility. 
Through both approaches to improving accessibility, the agents can reduce their distances to reach the facility more efficiently. 

\paragraph{Our Approach: Extending the Accessibility to Facilities.}
In this paper, we focus on the latter case when structural modifications are not possible due to various reasons (e.g., the time, costs, and legal regulations of modifying the structure) and propose to construct an accessibility range (e.g., zone or radius) to extend the accessibility of the facility.  
Extending from the accommodation services considered by \cite{chan2023aamas,chan2024aamas} where agents can only use the services at two pickup/drop-off points, in the proposed accessibility range, the agents can receive accommodation services (e.g., school buses, campus shuttles, or pickup services) within any points of the ranges to help reach the facility without incurring any additional cost.  For example, consider providing a shuttle service along a range. 
The shuttle can make stops at every point along the range for the agents rather than only two pickup or drop-off points in \cite{chan2023aamas,chan2024aamas} to transport the agents to the facility. Therefore, the cost of the agents between any points within the accessibility range is zero.

The accessibility range of a facility is common in many real-world situations where a social planner needs to determine a range in which they can deploy transportation mediums (like buses or shuttles) to pick up agents along that range and bring the agents into a common drop-off point near or at the facility\footnote{If the range contains the facility, the agents can reach the facility location directly. If the range does not contain the facility, the agents can reach an endpoint of the range and then reach the facility from the endpoint.}. 
For instance, when planning the school bus routes for students, the school district first needs to determine the range for picking up students from their homes to schools to ensure comprehensive accessibility for all students. 
Similarly, when designing (university or workplace) campus shuttle routes, the planners first need to determine the range that covers the campus. 
When providing medical services to patients, different providers could  determine the coverage areas to transport patients to medical facilities. 

\paragraph{Our Contributions.} 
In this paper, we consider a variation of FLPs to improve the accessibility of agents to the facility by constructing an accessibility range within the context of approximate mechanism design without money. 
In particular, we consider the most typical mechanism design setting of FLPs and existing studies \cite{chan2023aamas,chan2021mechanismsurvey,chan2024aamas,Zining2024TAMC} where a facility is located at a predetermined location on a real line $\mathbb R$. A set of $n$ agents is located on $\mathbb R$, and the facility is fixed at $0$ without loss of generality. The accessibility range is represented by an interval $(a,b)$, and the length of the range cannot exceed a constraint constant $d$, i.e., $|a-b|\le d$.

Given the accessibility range $(a,b)$, if the range contains the facility, an agent's \emph{cost} is either the distance of their ideal location to the range or zero (if the agent's ideal location is within the range) because accommodation services between any two points within the range incur zero cost to the agent.
Otherwise, the agent's cost is either the distance of the ideal location to the range (which can be zero if the agent is in the range) and the range's endpoint to the facility or the distance of the ideal location to the facility. 

For example, consider three agents located at $-2,0.8$ and $3$. Given a range $(-1,1)$ that contains the facility at $0$, the cost of the three agents is $1, 0$, and $2$, respectively. Given a range $(1,2)$ that does not contain the facility, the cost of the three agents is $2,0.8$, and $2$, respectively. See Figure \ref{fig:1} for an illustration, where difference agents are depicted in difference colors, and costs are indicated by arrows. 
\begin{figure}[htbp]
    \centering
    \includegraphics[scale=0.6]{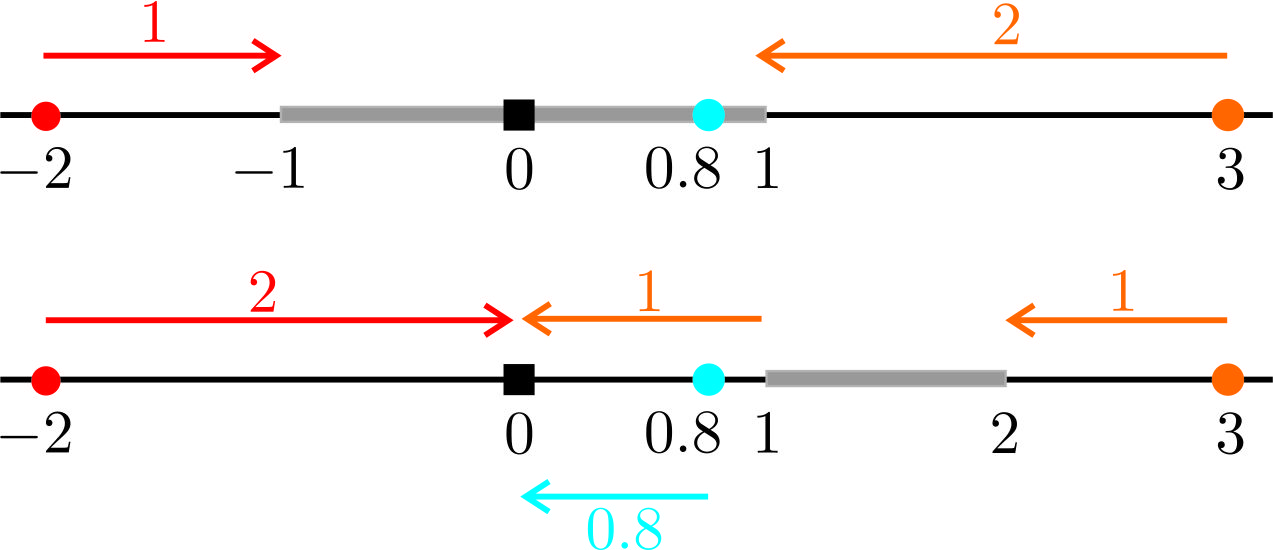}
    \caption{Illustration of agent costs with ranges $(-1,1)$ and $(1,2)$. }\label{fig:1}
\end{figure}

Our goal is to design strategyproof mechanisms that elicit ideal facility locations of agents and determine accessibility ranges (or intervals) within the real line to (approximately) minimize the social cost or the maximum cost of agents. 
Below, we elaborate on our mechanism design results and summarize them in Table \ref{tab:summary}.

\begin{itemize}
\item For the social cost, we show that there is an optimal and (group) strategyproof mechanism. 
 Furthermore, we consider designing strong group strategyproof (SGSP) mechanisms, a stronger notion of group strategyproofness that only requires that some agents within any group have incentives to misreport ideal locations. 
We show that there is an $(n-1)$-approximation SGSP mechanism, and
we complement these results by showing there is a lower bound of $\frac{n+2}{4}$ for the social cost. 

\item For the maximum cost, we show that there is (group) strategyproof mechanism with an approximation ratio of 2 and provide a matching lower bound of 2. 
 Furthermore, we consider strong group strategyproof mechanisms.
We show that there is a 2-approximation SGSP mechanism. 
We complement these results by showing there is a lower bound of $2$. 

\item We further consider randomized strategyproof mechanisms. For the social and maximum costs,  no randomized SGSP mechanism has better approximation ratios than $\frac{n+2}{4}$ and 1.5, respectively.
\end{itemize}

\begin{table}[h]
\caption{A Summary of Our Results } \label{tab:summary}
\centering
\begin{tabular}{ccc}
\toprule
Objective & SP\;/\;GSP & SGSP \\
\midrule
\multirow{2}{*}{Social cost} & UB: 1 & UB: $n-1$ \\
                            & LB: 1 & LB: $\dfrac{n+2}{4}$ \\
\midrule
\multirow{2}{*}{Maximum cost} & UB: 2 & UB: 2 \\
             & LB: 2 &  LB: 2 \\
\bottomrule
\\
\end{tabular}

SP = strategyproof, GSP = group SP, SGSP = strong GSP

UB = upper bound, LB = lower bound
\end{table}

{While we do not have any randomized strategyproof mechanisms, the randomized lower bound results show that our deterministic GSP/SGSP mechanisms are reasonable even when considering randomization.}

\paragraph{Outline.} { In Section \ref{sec:model}, we present the model of the considered variation of FLPs. In Section \ref{sec:social} and Section \ref{sec:max}, we consider the social cost and the maximum cost, respectively. In Section \ref{sec:ext}, we discuss randomized strategyproof mechanisms.   }

\paragraph{Related Work.} 
In the mechanism design studies of FLPs, Procaccia and Tennenholtz \cite{procaccia2013approximate} are the first to use the FLPs as case studies to demonstrate the concept of approximate mechanism design without money, investigating the design of approximately optimal strategyproof mechanisms through the lens of approximation ratios. 
They consider the settings of locating one or two facilities on the real line and the social and maximum costs. 
For locating one facility, they provide tight bounds on the approximation ratios for strategyproof mechanisms for both objectives. 
Building on this direction, subsequent mechanism design studies have thoroughly investigated a wide range of variations of FLPs, including preferences for facilities \cite{fong2018facility, fotakis2013winner, PaoloCarmine2016}, distance constraints \cite{chen2018mechanism, chen2021tight}, various cost functions \cite{alon2010strategyproof, feigenbaum2017approximately, feldman2013strategyproof, fotakis2013strategyproof}, and mechanism characterizations\cite{lin2020nearly, tang2020characterization}. We refer readers to a recent survey in \cite{chan2021mechanismsurvey}. 
However, different from our work, these mechanism design studies consider locating facilities. 

Our work focuses on the situations in which facilities have been located where relocating them is no longer possible. 
Previous mechanism design studies have proposed to improve the accessibility to a facility by either modifying the real lines \cite{chan2024aamas,Zining2024TAMC} or providing free accommodation services between two points when structural modifications are not possible \cite{chan2023aamas,chan2024aamas}. 
The work of \cite{chan2023aamas} is the first to consider providing free accommodation services between two points from the perspective of mechanism design without money.  
More specifically, \cite{chan2023aamas} aim to design strategyproof mechanisms to elicit ideal locations from agents and determine a zero-cost shortcut between two points to approximately minimize the social/total or maximum cost.  
Considering structural modification within the paradigm of mechanism design without money, \cite{chan2024aamas} further consider adding a short-cut edge with a cost proportional to the distance of its two endpoints divided by a discount factor, as well as adding two zero-cost shortcut edges (extending the setting of \cite{chan2023aamas}). The work of \cite{Zining2024TAMC} proposes to consider designing mechanisms to construct bridges to connect two regions separated by a physical barrier.  
However, all these above-mentioned studies focus either on the perspective of structural modification or constructing zero-cost short-cut edges (i.e., accommodation services) that can be accessed at only two points. In our setting, the agents can receive (zero-cost) accommodation services at any point within the accommodation range of the facility. 
{Different from our work, in \cite{chan2023aamas,chan2024aamas}, the cost of an agent is either the distance of the ideal location to the facility directly or the distances of the ideal location to an endpoint of the range and the range's endpoint to the facility regardless of whether the facility or the agent is within the range.}

\section{Preliminaries}\label{sec:model}

There are $n$ agents $N = \{1, 2, ..., n\}$ on a real line. 
Each agent $i\in N$ has an ideal location $x_i\in\mathbb R$. Denote the agents' location profile by $\mathbf{x} = (x_1, x_2, \ldots, x_n)\in \mathbb{R}^n$. 
We want to determine the accessibility range $(a,b)\in \mathbb{R}^2$ of the facility with fixed location at $0$. This range can be regarded as an interval of $\mathbb R$ and has to satisfy a distance constraint $|a-b|\le d$, where $d$ is a constant. We always assume w.l.o.g. that $a\le b$. 

Given $(a,b)$, the agents can travel within the interval with zero cost. 
{As mentioned earlier, the reason the cost is zero is that the accommodation services are provided to the agents without any charge. 
The agents also do not need to exert any direct efforts to travel within the range explicitly (e.g., shuttle services).}
{We can also view the range to be} equivalent to the case when we shrink the interval into a single point. 
Moreover, each agent $i\in N$ has a \emph{cost} $c(x_i,a,b)$ equal to the distance from their ideal location $x_i$ to the facility after shrinking the interval $(a,b)$.

A (deterministic) mechanism $f: \mathbb{R}^n \rightarrow \mathbb{R}^2$ maps a location profile $\mathbf{x}$ to the accessibility range $(a, b)$. 
A mechanism is \emph{strategyproof} (SP) if an agent can never
benefit from reporting a false location, regardless of the
strategies of others.
Formally, a mechanism $f$ is SP if for all $\mathbf x\in\mathbb R^n,i\in N, x_{i}'\in\mathbb R$, we have $c(x_i,f(\mathbf x))\le c(x_i,f(x_i',\mathbf x_{-i}))$, {where $\mathbf x_{-i}$ is the location profile of all agents but $i$}.
Further, a mechanism is \emph{group strategyproof} (GSP) if no group of agents can collude to misreport in a way that makes every member better off. Formally,  $f$ is GSP if for all $\mathbf x\in\mathbb R^n,S\subseteq N, \mathbf x_{S}'\in\mathbb R^{|S|}$, there exists $i\in S$ so that $c(x_i,f(\mathbf x))\le c(x_i,f(\mathbf x_S',\mathbf x_{-S}))$. Clearly, a GSP mechanism must be SP. 

Another well-studied notion in mechanism design and social choice theory is the \emph{strong group strategyproofness} (SGSP) \cite{manjunath2014efficient}, indicating that no group of agents can collude to misreport so that no member is worse off and at least one is better off. Formally,  $f$ is SGSP if for all $\mathbf{x} \in \mathbb{R}^n$, there does not exist a group of agents $S\subseteq N$ who misreport their location to $\mathbf{x}_S'$, so that   $c(x_i, f(\mathbf x_S',\mathbf x_{-S})) \le c(x_i, f(\mathbf{x}))$ for $\forall i\in S$, and  $c(x_j, f(\mathbf x_S',\mathbf x_{-S})) < c(x_j, f(\mathbf{x}))$ for some $j\in S$.
Clearly, an SGSP mechanism must be GSP. 

We consider two objectives, minimizing the \emph{social cost} $\text{SC}(\mathbf x, a,b)$, and minimizing the \emph{maximum cost} $\text{MC}(\mathbf x, a,b)$:
$$
\left\{
\begin{aligned}
    \text{SC}(\mathbf x, a,b)&=\sum_{i\in N}c(x_i,a,b)\\
    \text{MC}(\mathbf x, a,b)&=\max_{i\in N}c(x_i,a,b)
\end{aligned}
\right.
$$
A mechanism $f$ is $r$-approximation for objective $T\in\{\text{SC},\text{MC}\}$ if 
$$\sup_{\mathbf x}\frac{T(\mathbf x,f(\mathbf x))}{\min\limits_{a,b:|a-b|\le d}T(\mathbf x,a,b)}\le r.$$ 

\medskip
We give two {intuitive lemmas} on the accessibility ranges. The first one says that if the range does not cover the facility at 0, then we can move it so that its endpoint reaches 0. The second one says that the length of the range should be as large as possible. 

\begin{lemma}\label{lemma:basic1}
For any range $(a, b)$ with $a > 0$ and agent $i\in N$,  $c(x_i,a,b)\ge c(x_i,0,b-a)$.
For any range $(a, b)$ with $b< 0$ and agent $i\in N$,  $c(x_i,a,b)\ge c(x_i,a-b,0)$.
\end{lemma}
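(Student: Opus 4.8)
The plan is to prove both inequalities by a direct case analysis on the position of the agent's ideal location $x_i$ relative to the range and the facility. I will handle the first claim (the case $a > 0$, so the range $(a,b)$ lies entirely to the right of the facility at $0$); the second claim follows by the symmetric reflection $x \mapsto -x$, which preserves both the distance constraint and all costs. Write $\ell = b - a \ge 0$ for the length, so the proposed replacement range is $(0,\ell)$, which contains the facility as an endpoint. I will recall the cost definition: with a range not containing the facility, an agent at $x$ pays $\min\{\,|x|,\ \mathrm{dist}(x,(a,b)) + \mathrm{dist}(\text{nearest endpoint},0)\,\}$, i.e. the better of walking straight to $0$ or entering the range and riding to the nearer endpoint and then walking to $0$; with a range containing $0$, the agent pays $\mathrm{dist}(x,(a,b))$ (which is $0$ if inside).

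The key computation is to get a clean closed form for each side. For the shifted range $(0,\ell)$: since $0$ is an endpoint, $c(x_i,0,\ell) = \max\{\,0,\ -x_i,\ x_i - \ell\,\}$, i.e. it is $|x_i|$ when $x_i \le 0$, it is $0$ when $0 \le x_i \le \ell$, and it is $x_i - \ell$ when $x_i \ge \ell$. For the original range $(a,b)$ with $0 < a \le b$ and $b - a = \ell$: the "through the range" option costs $\mathrm{dist}(x_i,(a,b)) + a$ (the near endpoint to $0$ is always the left endpoint $a$), so $c(x_i,a,b) = \min\{\,|x_i|,\ \mathrm{dist}(x_i,(a,b)) + a\,\}$. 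I would then split into the three regimes for $x_i$ and in each show $c(x_i,a,b) \ge c(x_i,0,\ell)$: for $x_i \le a$ we have $\mathrm{dist}(x_i,(a,b)) + a = (a - x_i) + a \ge |x_i|$ when $x_i \ge 0$ and trivially $|x_i|$ dominates $c(x_i,0,\ell)=|x_i|$ when $x_i \le 0$, so $c(x_i,a,b)=|x_i|$ there and $c(x_i,0,\ell)\le |x_i|$; for $a \le x_i \le b$ the through-range option costs exactly $a \ge 0 = $ lower bound side in part of this regime, and one checks $\min\{x_i, a\} \ge c(x_i,0,\ell)$ where the latter is $0$ if $x_i \le \ell$ and $x_i - \ell$ if $x_i > \ell$ (using $\ell = b - a$ and $x_i \le b$); for $x_i \ge b$, $c(x_i,a,b) = \min\{x_i,\ (x_i - b) + a\} = \min\{x_i, x_i - \ell\} = x_i - \ell = c(x_i,0,\ell)$, with equality.

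The only mildly delicate regime is $a \le x_i \le b$: here I must compare $\min\{x_i, a\}$ against $\max\{0, x_i - \ell\}$. If $x_i \le \ell$ the right side is $0$ and the claim is immediate. If $x_i > \ell$, then since $x_i \le b = a + \ell$ we get $x_i - \ell \le a$, and also $x_i - \ell < x_i$, so $x_i - \ell \le \min\{x_i,a\}$, as needed. So I expect no real obstacle — the proof is a bookkeeping exercise — but care is needed to state the cost function's "min" form correctly in the non-covering case and to confirm that the near endpoint to the facility is always $a$ when $a > 0$, which is what makes the "$+a$" term appear uniformly. Once the first claim is established, I will simply remark that applying it to the reflected profile $(-x_i)$ and the reflected range $(-b,-a)$ (which has $-b < 0$ and the same length) yields $c(x_i,a,b) = c(-x_i,-b,-a) \ge c(-x_i,0,\ell) = c(x_i,-\ell,0) = c(x_i,a-b,0)$, giving the second statement.
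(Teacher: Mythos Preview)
Your proof is correct and takes essentially the same approach as the paper: a direct case analysis on the position of $x_i$, with the second claim handled by symmetry. The only cosmetic difference is that you partition according to where $x_i$ sits relative to the original range $(a,b)$, whereas the paper partitions relative to the shifted range $(0,b-a)$; both lead to the same computations.
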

\begin{proof} 
We only prove the case when $a>0$.    If $x_i \le 0$, the cost of $i$ under intervals $(a, b)$ and $(0, b - a)$ are both equal to $|x_i|$. If $0 < x_i \le b - a$, the cost of $i$ under $(0, b-a)$ is 0, indicating that it cannot be larger than the cost under $(a, b)$. If $x_i > b - a$, the cost of $i$ under $(0, b-a)$ is $x_i - (b - a)$,  while the cost  under $(a, b)$ is $x_i$ when $x_i \le a$, and $x_i - \left(\min\left(x_i, b\right)-a\right) \ge x_i - (b - a)$ when $x_i > a$. 
\end{proof}

\begin{lemma}\label{lemma:basic2}
 For any range $(a,b)$ with $b - a < d$ and any agent $i\in N$, $c(x_i,a,b)\ge c(x_i,a,a+d)$. 
\end{lemma}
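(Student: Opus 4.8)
The plan is to reduce the claim to the set-monotonicity of point-to-interval distances. First I would record a single closed-form expression for the cost that is valid whether or not the range covers the facility: for every interval $[a,b]$,
$$c(x_i,a,b)=\min\!\Big(|x_i|,\ \operatorname{dist}(x_i,[a,b])+\operatorname{dist}(0,[a,b])\Big),$$
where $\operatorname{dist}(p,[a,b])$ denotes the Euclidean distance from a point $p$ to the interval $[a,b]$. This needs only a one-line check: if $0\in[a,b]$ then $\operatorname{dist}(0,[a,b])=0$ and $\operatorname{dist}(x_i,[a,b])\le|x_i|$, so the right-hand side collapses to $\operatorname{dist}(x_i,[a,b])$, which is the ``range contains the facility'' case; and if $0\notin[a,b]$ the second argument of the minimum is exactly ``walk to the range, travel for free to the endpoint nearest $0$, then walk to $0$'', so the minimum with the direct distance $|x_i|$ is precisely the definition used in the model (a quick inspection of the two sub-cases $a>0$ and $b<0$ confirms that $\operatorname{dist}(0,[a,b])$ equals the distance from $0$ to the nearer endpoint).

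Second, since $b-a<d$ we have the inclusion $[a,b]\subseteq[a,a+d]$. Enlarging a set can only decrease the distance from a fixed point to it, so $\operatorname{dist}(x_i,[a,a+d])\le\operatorname{dist}(x_i,[a,b])$ and $\operatorname{dist}(0,[a,a+d])\le\operatorname{dist}(0,[a,b])$. Adding these two inequalities gives
$$\operatorname{dist}(x_i,[a,a+d])+\operatorname{dist}(0,[a,a+d])\le\operatorname{dist}(x_i,[a,b])+\operatorname{dist}(0,[a,b]).$$
Since the map $t\mapsto\min(|x_i|,t)$ is nondecreasing, applying it to both sides of this inequality yields $c(x_i,a,a+d)\le c(x_i,a,b)$, as required.

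The only mildly delicate step is the first one: one has to make sure the unified formula faithfully reproduces the model's piecewise definition in every case, in particular that when the range already contains $0$ the ``direct'' term $|x_i|$ is never binding. After that, everything is immediate. If one prefers to avoid the unified formula, the alternative is a direct case analysis on the sign of $a$ (and of $a+d$) and on the position of $x_i$ relative to $a$, $b$, and $a+d$; there the hypothesis $b-a\le d$ is genuinely used only in the case where $x_i$ lies to the right of both ranges and both ranges lie to the right of the facility, where the two costs equal $x_i-b+a$ and $x_i-d$ and the required inequality is exactly $b-a\le d$. I would present the unified-formula argument, since it settles all cases simultaneously.
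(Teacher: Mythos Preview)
Your argument is correct. The unified cost formula
\[
c(x_i,a,b)=\min\bigl(|x_i|,\ \operatorname{dist}(x_i,[a,b])+\operatorname{dist}(0,[a,b])\bigr)
\]
does faithfully reproduce the model's piecewise definition, and once that is established the lemma follows immediately from the set-inclusion $[a,b]\subseteq[a,a+d]$ and monotonicity of $t\mapsto\min(|x_i|,t)$.

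It is worth noting, however, that the paper itself treats this lemma as essentially self-evident: its entire proof is the single sentence ``It is clear that by increasing the length of the interval with one endpoint fixed, no agent would increase their cost.'' So you have not so much taken a different route as supplied the rigor the paper omits. The unified-formula device is a genuine addition: it packages the two regimes (range covering or not covering the facility) into one expression, which makes the monotonicity argument uniform and avoids the case split the paper presumably has in mind. This is a nice contribution, and the same formula would also streamline the proof of the companion Lemma~\ref{lemma:basic1}, since translating $[a,b]$ to $[0,b-a]$ (when $a>0$) only decreases $\operatorname{dist}(0,\cdot)$ and does not increase $\operatorname{dist}(x_i,\cdot)+\operatorname{dist}(0,\cdot)$ by a triangle-inequality style check. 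One small quibble: in your closing remark you write ``$b-a\le d$'' where the hypothesis is the strict inequality $b-a<d$; this is harmless for the argument (the inclusion $[a,b]\subseteq[a,a+d]$ holds either way), but you may want to keep the phrasing consistent with the statement.
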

\begin{proof}
It is clear that by increasing the length of the interval with one endpoint fixed,  no agent would increase their cost. 
\end{proof}

Given location profile $\mathbf x$,  let $N_1=\{i\in N|x_i\le 0\}$ be the set of agents on the left of the facility, and $N_2=N\setminus N_1$ the set of agents on the right. Let $x_l=\min\{x_i|i\in N\}$ be the leftmost agent location, and $x_r=\max\{x_i|i\in N\}$ be the rightmost agent location.

\section{The Social Cost}\label{sec:social}
We study the social cost.  
In  Section \ref{sec:socgsp}, we present an optimal GSP mechanism. 
In Section \ref{sec:socsgsp}, we provide (asymptotically) tight upper and lower bounds on the approximation ratios of SGSP mechanisms. 

\subsection{GSP Mechanisms}\label{sec:socgsp}

When all agents are located on the left {of the facility}, it is clear that $(-d,0)$ is an optimal solution since every agent achieves their best possible cost. Similarly, when all agents are located on the right {of the facility}, the solution $(0,d)$ is optimal. 

The case when $x_l<0<x_r$ is more complicated. For any point $y\in\mathbb R$, define $N_l(y)=\{i\in N|x_i\le y\}$ to be the set of agents who are to the left of $y$, and $N_r(y)=\{i\in N|x_i\ge y\}$ to be the set of agents who are to the right of $y$. Let $D=\{y\in\mathbb R| N_l(y)\ge N_r(y+d) \}$ be the set of points $y$ on the real line so that the number of agents to the left of $y$ is at least the number of agents to the right of $y+d$. We use $\inf D$ to denote the infimum of $D$.

\begin{mechanism}\label{mec:sc1}
    Given location profile $\mathbf x$:
 
    \quad if $\inf D\ge 0$, return $(0,d)$;
 
    \quad if $\inf D\le -d$, return $(-d, 0)$;
 
    \quad if $-d < \inf D < 0$, return $(\inf D, \inf D + d)$. 
\end{mechanism}

\begin{lemma}\label{lemma:sc1-1}
    Mechanism \ref{mec:sc1} returns an optimal solution for the social cost.
\end{lemma}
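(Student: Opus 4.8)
The plan is to show that Mechanism~\ref{mec:sc1} minimizes the social cost by a structural characterization of optimal ranges, reducing the problem to a one-dimensional search and then verifying that $\inf D$ (or the boundary cases $(0,d)$, $(-d,0)$) pinpoints the optimum. First I would dispose of the easy cases: when all agents lie on one side of the facility, Lemmas~\ref{lemma:basic1} and~\ref{lemma:basic2} already tell us the optimum is $(-d,0)$ or $(0,d)$, and one checks directly that in these cases $\inf D\le -d$ or $\inf D\ge 0$, so the mechanism agrees. For the interesting case $x_l<0<x_r$, I would first argue using Lemma~\ref{lemma:basic2} that we may take the range to have full length $d$, i.e. of the form $(y,y+d)$, so the only free parameter is the left endpoint $y$. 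Then I would write the social cost as a function $g(y)=\mathrm{SC}(\mathbf x,y,y+d)$ and study its behaviour.

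The key step is to understand $g(y)$. For a range $(y,y+d)$, each agent's cost depends on where $x_i$ sits relative to $y$, $y+d$, and $0$; breaking into cases, an agent to the left of $y$ contributes $\min(|x_i|,\, y - x_i + \mathrm{dist}(\text{endpoint},0))$ — but since the interval has length $d$, after shrinking it to a point the relevant quantity is essentially $|x_i - y| + |\text{(the closer endpoint of }(y,y+d)\text{ to }0)|$ versus going directly, and for agents inside the interval the cost is the distance from $0$ after shrinking, which is $\max(0, -y)$ if $0<y$ ... I would carefully tabulate these to show $g$ is piecewise linear and convex in $y$ on the region where $0 \in (y, y+d)$ is false, and constant-ish when $0\in(y,y+d)$. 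The crucial observation driving the mechanism is that moving $y$ to the right by $\epsilon$ changes $g$ by (roughly) $\epsilon\cdot(|N_r(y+d)| - |N_l(y)|)$ plus a correction from the facility-side term; hence $g$ is minimized exactly where $|N_l(y)|$ first catches up to $|N_r(y+d)|$, which is the definition of $\inf D$. I would then need to confirm that the facility-position correction term does not shift this argmin outside $[-d,0]$ — this is where the three-way split in the mechanism comes from, and it is the main obstacle: one must check that when the unconstrained balance point wants to be positive the best feasible choice is $(0,d)$ (and symmetrically $(-d,0)$), which follows because once $0\in(y,y+d)$ the facility term vanishes and $g$ is governed purely by the agent-balance term, monotone toward the balance point.

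Having established that $g$ is minimized at $y^\star=\inf D$ clipped to $[-d,0]$, I would finish by matching this to the mechanism's output: if $\inf D \ge 0$ the clipped minimizer is $0$, giving $(0,d)$; if $\inf D\le -d$ it is $-d$, giving $(-d,0)$; otherwise it is $\inf D$ itself, giving $(\inf D,\inf D+d)$. A small technical point is that $D$ is defined by a non-strict inequality and we take an infimum, so I should check that $y^\star=\inf D$ is actually attained as a minimizer of $g$ (it is, by left-continuity of the piecewise-linear $g$ and the fact that the slope changes sign at $\inf D$); the half-open versus closed nature of the interval $(a,b)$ does not affect costs since boundary agents have the same cost either way. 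I expect the combinatorial bookkeeping of the cost cases — especially correctly handling agents strictly inside the range when the range straddles the facility — to be the part requiring the most care, but no step should need more than elementary case analysis once the piecewise-linear/convex structure of $g$ is in hand.
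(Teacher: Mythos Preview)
Your plan is essentially the paper's: reduce to full-length intervals $(y,y+d)$ via Lemma~\ref{lemma:basic2}, restrict to $y\in[-d,0]$ via Lemma~\ref{lemma:basic1}, and show the mechanism's output minimizes $g(y)=\mathrm{SC}(\mathbf x,y,y+d)$ there. The paper does the last step by directly computing $g(a)-g(\text{output})\ge 0$ in each of the three cases, which is exactly your slope/convexity argument written out without the word ``convex.''

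Two places in your write-up are off, though neither is fatal. First, the ``easy case'' claim is false: all agents lying on one side does \emph{not} force $\inf D\ge 0$ or $\inf D\le -d$. For instance, with every agent at $d/2$ one has $\inf D=-d/2\in(-d,0)$, and the mechanism outputs $(-d/2,d/2)$ --- which is still optimal, but not via the route you describe. You should simply drop this separate treatment and let those profiles fall into the general analysis.

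Second, your description of where $g$ is convex is backwards. On $[-d,0]$ (where $0\in(y,y+d)$) the facility sits inside the range, each agent's cost is just its distance to the interval, and $g$ is piecewise linear with slope $|\{i:x_i<y\}|-|\{i:x_i>y+d\}|$; this is where convexity holds and where the $\inf D$ characterization lives. Outside $[-d,0]$, $g$ need \emph{not} be convex (for $y>0$ the slope is $|\{i:y\le x_i\le y+d\}|$, which can rise and fall), so there is no ``facility-side correction'' to track: Lemmas~\ref{lemma:basic1}--\ref{lemma:basic2} already guarantee the global minimum lies in $[-d,0]$, and you should invoke them up front rather than carry the extra term. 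Once you make that restriction first, the rest of your sketch goes through cleanly and coincides with the paper's computation.
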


\begin{proof}
    We discuss the optimality in the three cases.

    \textbf{Case 1}. $\inf D\ge 0$. By Lemmas \ref{lemma:basic1} and \ref{lemma:basic2}, it is clear that $(0,d)$ is the best one among all solutions $(a,b)$ with $a\ge 0$. We only need to compare with any solution $(a,a+d)$ with $-d\le a<0$, because the social cost of  $(a, a+d)$ with $a<-d$ cannot be less than that of $(-d, 0)$. We have
    \begin{align*}
        & \text{SC}(\mathbf x, a,a+d) - \text{SC}(\mathbf x, 0, d) \\
        =~& \sum_{i:x_i < 0}\max(0, -x_i + a) + \sum_{i:x_i \ge 0}\max(0, x_i - (a + d)) - \sum_{i:x_i < 0}|x_i| - \sum_{i: x_i \ge 0}\max(0, x_i - d) \\
        \ge~& \sum_{i: x_i < 0}a + \sum_{i: 0 \le x_i < d} (0-0) + \sum_{i:x_i \ge d}(-a) = |i\in N_l(0): x_i < 0|\cdot a - |N_r(d)|\cdot a.
    \end{align*}
    Since $\inf D\ge 0$, we know $|i\in N_l(0): x_i < 0| \le |N_r(d)|$, as otherwise we can find a small positive $\epsilon \rightarrow 0$ so that $N_l(-\epsilon) \ge N_r(d-\epsilon)$, as long as there is no agent at $[\epsilon, 0)$ or $[d-\epsilon, d)$. 
    Therefore we have 
    $$\text{SC}(\mathbf x,a,a+d) - \text{SC}(\mathbf x,0, d) \ge 0,$$
    indicating that $(0,d)$ is optimal for the social cost. 

    \textbf{Case 2}.  $\inf D\le -d$. By Lemmas \ref{lemma:basic1} and \ref{lemma:basic2}, it is clear that $(-d,0)$ is the best one among all solutions $(a,b)$ with $b\le 0$. We only need to compare with any solution $(b-d, b)$ with $0 < b \le d$, because the social cost of $(b-d, b)$ with $b>d$ cannot be less than that of $(0, d)$. We have
    \begin{align*}
        & \text{SC}(\mathbf x, b-d,b) - \text{SC}(\mathbf x, -d, 0) \\
        =~& \sum_{i: x_i > 0}\max(0, x_i - b) + \sum_{i:x_i \le 0}\max(0, -x_i + b - d)) - \sum_{i: x_i > 0}x_i - \sum_{i: x_i \le 0}\max(0, -x_i - d) \\
        \ge~& \sum_{i: x_i > 0}-b + \sum_{i: -d < x_i \le 0} (0-0) + \sum_{i:x_i \le -d}b = N_l(-d)\cdot b  -  |i\in N_r(0): x_i > 0|\cdot b.
    \end{align*}
    Since $\inf D\le -d$, we know $|i\in N_r(0): x_i > 0| \le |N_l(-d)|$, indicating that $(-d,0)$ is optimal for the social cost.

    \textbf{Case 3}. $-d < \inf D < 0$.  The mechanism returns $(\inf D, \inf D + d)$. By Lemmas \ref{lemma:basic1} and \ref{lemma:basic2}, we only need to compare with $(a, a + d)$ with $-d \le a < \inf D$ and $\inf D < a \le 0$, respectively. If $a<\inf D$, we have
    \begin{align*}
        & \text{SC}(\mathbf x, a,a+d) - \text{SC}(\mathbf x, \inf D, \inf D + d) \\
        =~& \sum_{i: x_i \ge 0}\max(0, x_i - a - d) + \sum_{i: x_i < 0}\max(0, -x_i + a) - \sum_{i: x_i \ge 0}\max(0, x_i - \inf D - d) \\
        &- \sum_{i: x_i < 0}\max(0, -x_i + \inf D) \\
        \ge~& \sum_{i: x_i \ge \inf D+d}(\inf D - a) + \sum_{i: \inf D \le x_i < \inf D + d} (0-0) + \sum_{i: x_i < \inf D}(a - \inf D)\\
        =~& (\inf D - a)\cdot (|N_r(\inf D + d)|-|\{i\in N_l(0) : x_i < \inf D\}|)
        \ge 0.
    \end{align*}

    In the same way, if $a>\inf D$, we have
    \begin{align*}
        & \text{SC}(\mathbf x, a,a+d) - \text{SC}(\mathbf x, \inf D, \inf D + d) \\
        \ge ~& (a - \inf D)\cdot (|N_l(\inf D)|-|\{i\in N_r(0): x_i > \inf D + d\}|)\\
        \ge ~& 0.
    \end{align*}
In both cases, it follows that
    $$
    \text{SC}(\mathbf x,a,a+d) - \text{SC}(\mathbf x, \inf D, \inf D + d) \ge 0.
    $$
Therefore, the solution $(\inf D, \inf D + d)$ returned by the mechanism is optimal. 
\end{proof}

Next, we show the group strategyproofness of this mechanism. 

\begin{lemma}\label{lemma:sc1-2}
    Mechanism \ref{mec:sc1} is group strategyproof.
\end{lemma}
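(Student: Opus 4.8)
The plan is to prove group strategyproofness of Mechanism~\ref{mec:sc1} by a case analysis on the position of the interval returned under the true profile $\mathbf x$, exploiting the fact that the mechanism's output depends on $\mathbf x$ only through the single scalar $\inf D$, which is in turn determined by the order statistics of the locations relative to the shift-by-$d$ counting condition. First I would record the structural facts we need: (i) by Lemmas~\ref{lemma:basic1} and~\ref{lemma:basic2} an agent's cost under any returned interval $(a,a+d)$ is $\max(0,-x_i+a)$ for $x_i<0$ (cost increasing in $a$) and $\max(0,x_i-(a+d))$ for $x_i\ge 0$ (cost decreasing in $a$), so every agent's cost is a monotone function of the single output parameter $a=\inf D$ (clamped to $[-d,0]$); (ii) an agent strictly to the left of the facility weakly prefers the interval to move left (smaller $a$), an agent strictly to the right weakly prefers it to move right (larger $a$), and an agent already inside the returned interval has cost $0$ and cannot improve; (iii) $\inf D$ is monotone in each reported coordinate in the ``expected'' direction — moving a right-agent's report further right can only (weakly) push $\inf D$ right, etc.

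The core argument: suppose a coalition $S$ misreports to $\mathbf x_S'$ and the output parameter changes from $a^\ast=\inf D(\mathbf x)$ (clamped) to $a'=\inf D(\mathbf x_S',\mathbf x_{-S})$ (clamped). If $a'=a^\ast$ nobody's cost changes, so assume $a'\ne a^\ast$; WLOG $a'>a^\ast$ (the case $a'<a^\ast$ is symmetric). Any agent $i\in S$ with $x_i<a^\ast$ (hence $x_i<a'$, to the left of both intervals) sees cost $-x_i+a^\ast$ become $-x_i+a'>-x_i+a^\ast$ — strictly worse; if such an agent exists we are done. Otherwise every member of $S$ has $x_i\ge a^\ast$. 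Now I claim some member must have $x_i<0$, i.e.\ must be a ``left-type'' agent who (weakly) wanted the interval to the left: if every member of $S$ had $x_i\ge 0$, then replacing their reports by anything can only weakly increase $N_l(y)$-losses on the left and shift $\inf D$ weakly to the left (by the monotonicity in (iii)), contradicting $a'>a^\ast$ — so $S$ contains an agent with $a^\ast\le x_i<0$. For such an agent the true cost under $a^\ast$ is $-x_i+a^\ast\le -x_i$ if $x_i<a^\ast$... but here $x_i\ge a^\ast$, so actually the cost is $\max(0,-x_i+a^\ast)$; if $x_i\ge a^\ast$ and $x_i<0$ then $-x_i+a^\ast$ could be positive or, if $x_i\ge a^\ast$ and the agent lies in $[a^\ast,a^\ast+d)$... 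I would split on whether $x_i\in[a^\ast,a^\ast+d)$ (true cost $0$, so the misreport cannot help, and if $a'$ moves the interval off of $x_i$ it strictly hurts) versus $x_i\ge a^\ast+d$ (true cost $x_i-a^\ast-d$, decreasing in $a$, so moving to $a'>a^\ast$ strictly hurts). In every sub-case some member of $S$ is strictly worse off, which is exactly what GSP requires.

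The main obstacle I anticipate is making the monotonicity claim (iii) about $\inf D$ fully rigorous, together with the boundary/clamping behavior: $\inf D$ is the infimum of a set defined by an integer inequality $N_l(y)\ge N_r(y+d)$, so it is a left-continuous step-type quantity, and one must argue carefully that when a subset of agents (the members of $S$) change their reports, if all of them report at or to the right of where they truly were, then $N_l(y)$ at each $y$ can only decrease and $N_r(y+d)$ can only increase, hence $D$ shrinks and $\inf D$ weakly increases — and conversely. The $\epsilon$-type reasoning already used in the proof of Lemma~\ref{lemma:sc1-1} (handling agents sitting exactly at $0$, $d$, or the endpoints) is exactly the kind of nuisance that reappears here, and I would state one clean monotonicity sub-lemma up front to absorb it, rather than re-deriving it inside each case. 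Once that monotonicity lemma is in hand, the coalition argument above is essentially a short finite case check driven by the sign of $x_i$ relative to $\{a^\ast, a^\ast+d, 0\}$, because each agent's utility is monotone in the one-dimensional output.
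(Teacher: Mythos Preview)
Your overall strategy --- reduce to the one-dimensional output parameter $a$, use that each agent's cost is monotone in $a$, and then exhibit a coalition member who cannot strictly gain from the observed shift --- is sound and is the same skeleton the paper uses. But the execution contains a genuine gap at the step where you argue ``if every member of $S$ had $x_i\ge 0$, then replacing their reports by anything can only \ldots shift $\inf D$ weakly to the left.'' This is false. Your monotonicity claim (iii) concerns the direction in which a \emph{report} moves, not the agent's \emph{true} location; an agent with true $x_i\ge 0$ is free to misreport in either direction and can in fact push $\inf D$ to the right. Concretely, take $d=1$ and true profile $(-0.5,\,0.3,\,0.3)$: one checks $\inf D=-0.7$ and the output is $(-0.7,0.3)$. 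If the two agents at $0.3$ (who satisfy $x_i\ge 0$) misreport to $2$, then $\inf D$ jumps above $0$ and the output becomes $(0,1)$ --- a strict rightward shift. (Both misreporting agents had cost $0$ before and after, so GSP is not violated; but your claimed implication fails.)

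The correct partition is not by the sign of $x_i$ but by the position of $x_i$ relative to the returned interval $[a^\ast,a^\ast+d]$, which is what the paper does. Agents with $x_i\in[a^\ast,a^\ast+d]$ already have cost $0$ and cannot strictly improve, so any coalition seeking a strict improvement for every member contains only agents with $x_i<a^\ast$ or $x_i>a^\ast+d$; since the former want $a$ to decrease and the latter want it to increase, the coalition must lie entirely on one side. The remaining point --- that a coalition consisting only of agents with $x_i>a^\ast+d$ cannot push $\inf D$ to the right --- holds because at any $y$ just above $a^\ast$ each such agent already contributes $0$ to $|N_l(y)|$ and $1$ to $|N_r(y+d)|$; any misreport can only raise $|N_l'(y)|$ or lower $|N_r'(y+d)|$, so $y$ stays in $D'$ and $\inf D'\le a^\ast$. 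Your final case split also has a sign slip: for $x_i>a^\ast+d$ the cost $x_i-(a+d)$ is \emph{decreasing} in $a$, so moving to $a'>a^\ast$ \emph{helps} such an agent, not hurts --- which is exactly why the ``cannot push $\inf D$ right'' argument is indispensable and cannot be replaced by a cost-monotonicity observation alone.
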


\begin{proof}
    We discuss the misreporting of agent group $A\subseteq N$ in the three cases.

   \textbf{Case 1}. $\inf D\ge 0$. The mechanism returns $(0, d)$. It is easy to see that any agent $i$ with $x_i\ge 0$ has achieved their minimum possible cost and thus has no incentive to misreport. Therefore, the group $A$ consists of some agents to the left of 0. However, no matter how this group of agents misreports, it would always be the case $\inf D\ge 0$, and the outcome would not change. Thus, this group cannot benefit from misreporting. 

   \textbf{Case 2}. $\inf D\le -d$.  The mechanism returns $(-d, 0)$. It is easy to see that any agent $i$ with $x_i\le 0$ has achieved their minimum possible cost and thus has no incentive to misreport. Therefore, the group $A$ consists of some agents to the right of 0. However, no matter how this group of agents misreports, it would always be the case $\inf D\le -d$, and the outcome would not change. Thus, this group cannot benefit from misreporting. 

   \textbf{Case 3}. $-d < \inf D < 0$.  The mechanism returns $(a, b) = (\inf D, \inf D + d)$. It is easy to see that any agent $i$ with $\inf D\le x_i\le \inf D + d$ has achieved their minimum possible cost and thus has no incentive to misreport.
   Therefore, all agents in the group should have either $x_i > \inf D + d$ or $x_i < \inf D$. We note that if agent $i$ with $x_i < \inf D$ decreases the cost, then it must be the case that $\inf D$ moves to the left after misreporting. If agent $i$ with $x_i > \inf D + d$ decreases the cost, then it must be the case that $\inf D$ moves to the right after misreporting. It indicates that the group is either a subset of $\{i\in N| x_i < \inf D\}$ or a subset of $\{i\in N| x_i > \inf D + d\}$. However, the agents in $\{i\in N| x_i < \inf D\}$ cannot move $\inf D$ to the left by misreporting, and the agents in $\{i\in N| x_i > \inf D + d\}$ cannot move $\inf D$ to the right. Hence, no group of agents can misreport so that every group member gains. 
\end{proof}

The above two lemmas immediately give the following theorem. 
 
\begin{theorem}\label{theorem:1}
   Mechanism \ref{mec:sc1} is group strategyproof and optimal for the social cost.
\end{theorem}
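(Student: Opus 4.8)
The plan is short, because Theorem \ref{theorem:1} is precisely the conjunction of the two lemmas that immediately precede it. Lemma \ref{lemma:sc1-1} says Mechanism \ref{mec:sc1} always outputs a range minimizing the social cost (so its approximation ratio is exactly $1$), and Lemma \ref{lemma:sc1-2} says no coalition can jointly misreport so as to make every member strictly better off (so the mechanism is GSP). Combining the two gives the statement; and since GSP implies SP, this also yields the SP/GSP upper bound of $1$ for the social cost recorded in Table \ref{tab:summary}. So the proof I would write is essentially one sentence invoking Lemmas \ref{lemma:sc1-1} and \ref{lemma:sc1-2}.

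For completeness I would recall, in a line or two each, the two ingredients those lemmas rest on, since that is where the actual work sits. For optimality: Lemmas \ref{lemma:basic1} and \ref{lemma:basic2} first reduce the search to candidate ranges of the form $(a,a+d)$ with $-d\le a\le 0$ (a range not containing $0$ can be slid until it touches $0$, and a short range can be lengthened, neither step hurting any agent); one then splits on the position of $\inf D$ and checks by a direct marginal-cost computation that moving the left endpoint $a$ away from the value chosen by the mechanism changes $\text{SC}$ by a nonnegative multiple of the displacement, the sign being governed exactly by the defining inequality $|N_l(y)|\ge|N_r(y+d)|$ of the set $D$. For group strategyproofness: any agent lying inside the returned range already pays $0$ and never deviates, the remaining agents partition into those strictly left of $\inf D$ and those strictly right of $\inf D+d$, an agent of the first type gains only if $\inf D$ moves left and one of the second type only if $\inf D$ moves right, and neither sub-coalition can move $\inf D$ in the direction it needs because $\inf D$ is pinned down by a counting condition monotone in the reported profile.

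I do not expect any real obstacle in the theorem itself — all the content lives in Lemmas \ref{lemma:sc1-1} and \ref{lemma:sc1-2}. The one delicate point, already dispatched inside the proof of Lemma \ref{lemma:sc1-1}, is the boundary behaviour of $\inf D$: one must argue, via an $\epsilon\to 0$ perturbation, that $\inf D\ge 0$ forces $|\{i\in N_l(0):x_i<0\}|\le|N_r(d)|$ (and symmetrically in the other cases), taking care of the situation where agents sit exactly at the relevant endpoints. With that handled, Theorem \ref{theorem:1} is immediate.
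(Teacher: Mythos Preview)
Your proposal is correct and mirrors the paper exactly: the paper's own proof of Theorem \ref{theorem:1} is a single line stating that it follows immediately from Lemma \ref{lemma:sc1-1} (optimality) and Lemma \ref{lemma:sc1-2} (group strategyproofness). Your additional recap of the ideas behind those two lemmas is accurate and matches the paper's arguments, including the $\epsilon$-perturbation handling of the boundary case for $\inf D$.
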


\subsection{SGSP Mechanisms}\label{sec:socsgsp}

We note that Mechanism \ref{mec:sc1} is not strong group strategyproof. Suppose that $n\ge 4$ is an even number, and $d=1$. 
 The location profile contains one agent at $-d$, $\frac{n}{2}-1$ agents at $-\epsilon$ where $\epsilon$ is a sufficiently small positive number, and $\frac{n}{2}$ agents at $d$. 
The infimum of $D$ is $-\epsilon$, and the mechanism returns $(-\epsilon,d-\epsilon)$. The cost of the agent at $-d$ is $c(-d,-\epsilon,\epsilon+d)=d-\epsilon$, and the cost of the agent at $-\epsilon$ is $c(-\epsilon,-\epsilon,\epsilon+d)=0$.s
If all agents at $-\epsilon$ misreport location to $-d$, then the infimum of $D$ becomes $-d$, and the mechanism returns $(-d,0)$. Thus, the agent at $-d$ will have a smaller cost, and all agents at $-\epsilon$ still have 0 cost. Therefore, 
Mechanism \ref{mec:sc1} is not SGSP.

Next, we introduce an SGSP mechanism. The explanation for Mechanism \ref{mec:sc1} being not SGSP is that the value of $\inf D$ is vulnerable to the manipulation of a group of agents. Instead of using $\inf D$, in the following mechanism, we use $x_l$, the leftmost agent location, to distinguish the different cases. We will show that the leftmost agent location can strongly resist the manipulations from such groups, but this approach is not able to achieve a good guarantee of the social cost anymore. 

\begin{mechanism}\label{mec:soc-strong}
    Given location profile $\mathbf x$,
    
    \quad if $x_l \ge 0$, return $(0, d)$;

    \quad if $x_l \le -d$, return $(-d, 0)$;
    
    \quad if $-d< x_l < 0$, return $(x_l, x_l + d)$.
\end{mechanism}

\begin{lemma}\label{lemma:strong1}
    Mechanism \ref{mec:soc-strong} is strong group strategyproof.
\end{lemma}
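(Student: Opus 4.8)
The plan is to show that for any location profile $\mathbf x$ and any coalition $S \subseteq N$ that misreports to $\mathbf x_S'$, it cannot happen that every agent in $S$ is weakly better off and at least one is strictly better off. The key structural observation is that Mechanism \ref{mec:soc-strong} is driven entirely by $x_l = \min_i x_i$, the leftmost reported location, and the outcome is a monotone function of $x_l$: as $x_l$ decreases the returned interval $(a,b)$ moves left (or stays at $(-d,0)$ once $x_l \le -d$), and as $x_l$ increases it moves right (or stays at $(0,d)$ once $x_l \ge 0$). So the entire effect of a deviation on the outcome is captured by how the value of the leftmost location changes. I would first record the easy monotonicity facts: (i) the returned left endpoint $a = \max(-d, \min(0, x_l))$ is nondecreasing in $x_l$, and likewise $b = a+d$; and (ii) for a fixed agent at $x_i$, the cost $c(x_i, a, a+d)$ as a function of the parameter $a$ is minimized exactly when $a \le x_i \le a+d$, is nondecreasing in $a$ for $a \ge x_i$, and nonincreasing in $a$ for $a \le x_i - d$ — this is immediate from Lemmas \ref{lemma:basic1} and \ref{lemma:basic2} and the piecewise-linear form of $c$.

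Next I would split on who controls the leftmost location. Let $x_l$ be the true leftmost location and $x_l'$ the leftmost location after the deviation. Case A: the deviation does not change which ``side regime'' the mechanism is in in a way that moves the outcome, i.e. $f(\mathbf x_S', \mathbf x_{-S}) = f(\mathbf x)$; then no agent's cost changes and the coalition cannot be a violating one. Case B: the outcome moves left, which requires $x_l' < x_l$ (and in particular the outcome only changes if the true regime was not already $(-d,0)$); this forces some agent $j \in S$ with $x_j = x_l$ (a true leftmost agent) to be in $S$ and to have reported something $< x_l$, and moreover every agent in $S$ who is strictly left of or at the old interval benefits while every agent strictly right of the old interval is strictly harmed by the leftward shift. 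Case C is symmetric (outcome moves right, requires some true agent strictly realizing $x_l$ to misreport upward, but $x_l$ cannot increase unless *every* agent with location $x_l$ is in $S$ and raises their report).

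The crux is Case B (and symmetrically C). In Case B, moving the outcome left strictly increases the cost of any agent located strictly to the right of the old right endpoint $b$; in particular the agent $j$ with $x_j = x_l$ sits at the old left endpoint $a = x_l$ (when $-d < x_l < 0$) or at the old position $a=0$/$a=-d$, so its cost is currently $0$ and can only weakly increase — so $j$ cannot be strictly better off, and for the coalition to be non-harming we need $j$'s cost to stay $0$, i.e. the new interval must still contain $x_l$. But $x_l \in (a', a'+d)$ together with $a' = x_l' < x_l$ is fine, so that alone does not give a contradiction; the real point is that to produce a *strict* gain for some coalition member the outcome must move strictly left, hence $x_l' < x_l$, hence some coalition member originally realizing the minimum reports below it — but that agent's true cost was $0$ (it was inside the old interval since the interval's left endpoint equals $x_l$ when $-d<x_l<0$, and equals the agent's own position in the boundary regimes), so it gains nothing and, if the interval still contains $x_l$, loses nothing; meanwhile the agent actually producing the new minimum $x_l'$ is not a true agent there, contradiction — no true agent is at $x_l'$, so $x_l'$ cannot be the minimum of the true-plus-reported profile unless some $i \in S$ truly has $x_i \le x_l' < x_l$, contradicting $x_l = \min_i x_i$. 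I would write this last chain carefully: the leftmost reported location after deviation is $\min(\min_{i \in S} x_i', \min_{i \notin S} x_i)$, and since $\min_{i \notin S} x_i \ge x_l$ and we cannot have $\min_{i \in S} x_i' < x_l$ help *the reporters themselves* — the only agents who could benefit from a leftward move are those with true location $\ge b$, none of whom is the one at $x_l$ — the coalition cannot simultaneously move the outcome and have all its members weakly better with one strictly better. The main obstacle is bookkeeping the boundary regimes $x_l \ge 0$ and $x_l \le -d$ cleanly together with the interior regime and making the ``who benefits from a leftward/rightward shift'' dichotomy airtight; once that dichotomy is stated as a lemma the rest is short.
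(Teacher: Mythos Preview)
Your structural setup is right: the output depends only on $x_l$, so any deviation can only shift the interval left, shift it right, or leave it fixed. Your Case C (rightward shift) is essentially correct and matches the paper: to increase the reported minimum, every true leftmost agent must be in $S$ and report higher, and such an agent is then strictly harmed (its cost goes from $0$, or from $-d-x_l$ in the $x_l\le -d$ regime, to something strictly larger).

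The genuine gap is Case B. Two concrete errors. First, you claim a leftward shift ``forces some agent $j\in S$ with $x_j=x_l$\ldots\ to have reported something $<x_l$''; this is false, since \emph{any} member of $S$ can report a value below $x_l$ and drag the minimum down. Second, you write ``the only agents who could benefit from a leftward move are those with true location $\ge b$''; this is backwards, since those agents have cost $x_i-b$ which strictly \emph{increases} as the interval moves left. The subsequent ``contradiction'' about no true agent sitting at $x_l'$ is not a contradiction at all: of course some deviator is falsely reporting $x_l'$, and that is perfectly consistent with $x_l'$ being the new reported minimum.

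The fix is much simpler than what you attempt. In every regime where a leftward shift is possible (i.e.\ $x_l>-d$), the current left endpoint satisfies $a=\min(0,x_l)$, and hence \emph{no agent's true location lies strictly to the left of $a$}. By your own monotonicity fact (ii), decreasing $a$ can therefore only weakly increase every agent's cost; in particular no one is strictly better off, so no SGSP violation arises from a leftward move. This one-line observation is exactly what the paper uses (phrased per regime as ``if $a<x_l$, then no agent can decrease their cost''). Once you replace your Case B argument with it, your proof is complete and equivalent to the paper's.
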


\begin{proof}
We discuss the three cases and show that no group of agents can misreport simultaneously so that at least one member gains and no member harms. 
    If $x_l\ge 0$, every agent attains their minimum possible cost and has no incentive to misreport together.

    If $x_l \le -d$, the outcome changes only if the leftmost agent location $x_l$ moves to the right of $-d$, and the agent located $x_l$ must be in the group. However, the new outcome would strictly increase the cost of this agent. 

    If $-d< x_l < 0$, the mechanism returns $(x_l,x_l+d)$. Suppose a group of agents misreport so that the output becomes $(a, a+d)$. If $a < x_l$, then no agent can decrease their cost. If $a> x_l$, then the agent at $x_l$ must be in the group; however, the cost of this agent will increase.
\end{proof}

\begin{theorem}
    Mechanism \ref{mec:soc-strong} is strong group strategyproof and $(n-1)$-approximation for the social cost.
\end{theorem}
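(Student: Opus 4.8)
The strong group strategyproofness has already been established in Lemma~\ref{lemma:strong1}, so the plan is to prove the $(n-1)$-approximation guarantee for the social cost. I would argue case by case according to the sign pattern of the agent locations, mirroring the three cases of Mechanism~\ref{mec:soc-strong}, and in each case bound the ratio of the mechanism's social cost to the optimal social cost.

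First I would dispose of the easy cases. If $x_l \ge 0$, all agents are on the right of the facility, and the mechanism returns $(0,d)$, which is optimal (as noted at the start of Section~\ref{sec:socgsp}), so the ratio is $1$. Symmetrically, if $x_l \le -d$, the mechanism returns $(-d,0)$; here every agent to the left of $-d$ (including $x_l$) achieves cost $|x_i|-d$, which is the minimum possible over all feasible ranges, and every agent to the right of $-d$ is handled at least as well as by any range with left endpoint $\ge -d$, so by Lemmas~\ref{lemma:basic1} and~\ref{lemma:basic2} the returned range dominates the optimum and again the ratio is $1$. The substantive case is $-d < x_l < 0$, where the mechanism returns $(x_l, x_l+d)$.

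In the main case, I would fix notation: let $\text{OPT} = \min_{a,b:|a-b|\le d}\text{SC}(\mathbf x,a,b)$, achieved (w.l.o.g., by Lemmas~\ref{lemma:basic1},~\ref{lemma:basic2}) by some range $(a^*, a^*+d)$ with $-d \le a^* \le 0$. I want to compare $\text{SC}(\mathbf x, x_l, x_l+d)$ to $\text{SC}(\mathbf x, a^*, a^*+d)$. The agent at $x_l$ contributes $0$ to the mechanism's cost (since $x_l$ is the left endpoint of the returned range), so the mechanism's social cost is entirely borne by the other $n-1$ agents. The key observation is that any single agent's cost under the range $(x_l, x_l+d)$ is at most $x_r - x_l$: an agent at $x_i \le 0$ has cost $|x_i| = -x_i \le -x_l$, and an agent at $x_i \ge 0$ has cost at most $x_i - x_l$ if it must route to the left endpoint, or just $x_i$ directly, either way at most $x_r - x_l$... but I need to be more careful, since an agent far to the right could have cost up to $\min(x_i, x_i - (x_l+d) + (\text{distance from }x_l+d\text{ to }0)) $; I would instead bound each agent's mechanism cost by the ``obvious'' quantity and each agent's optimal cost from below. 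A cleaner route: show that for \emph{every} agent $i$, $c(x_i, x_l, x_l+d) \le \text{OPT}$, except possibly we lose the contribution of one agent. Indeed, since $-d < x_l < 0$, the range $(x_l, x_l+d)$ contains $0$ (as $x_l < 0 < x_l + d$), so an agent's cost is simply its distance to the interval: $0$ if $x_i \in [x_l, x_l+d]$, $-x_i + x_l$... wait, $x_l$ is the leftmost agent so no agent is strictly left of $x_l$, hence every agent with $x_i \le 0$ lies in $[x_l, 0] \subseteq [x_l, x_l+d]$ and has cost $0$; only agents with $x_i > x_l + d$ have positive cost, namely $x_i - (x_l+d)$. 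Thus $\text{SC}(\mathbf x, x_l, x_l+d) = \sum_{i: x_i > x_l+d}(x_i - x_l - d)$. On the other hand, for the same set of agents, any feasible range of length $\le d$ with $0$ in it or not still leaves each such agent cost $\ge x_i - x_l - d$ is false in general — but at least one such agent, the rightmost, has optimal cost $\ge x_r - 0 - d = x_r - d$ when the optimal range has its right endpoint at most $d$; and each of the at most $n-1$ positive-cost agents has mechanism cost at most $x_r - x_l - d \le (x_r - d) + |x_l| \le \text{something}\cdot\text{OPT}$. I would make this precise by showing $x_r - x_l - d \le \text{OPT}$ whenever there is at least one agent right of $x_l+d$ (since then $\text{OPT} \ge c(x_r, a^*, a^*+d) \ge x_r - (a^*+d) \ge x_r - d \ge x_r - d$, and also $\text{OPT} \ge |x_l|$-type bounds can be combined), and then summing over the at most $n-1$ offending agents gives $\text{SC}(\mathbf x, x_l, x_l+d) \le (n-1)\,\text{OPT}$.

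The main obstacle I anticipate is getting a clean lower bound on $\text{OPT}$ that simultaneously absorbs both the ``leftward'' displacement $|x_l|$ and the ``rightward'' overshoot $x_r - d$, since a single range cannot serve both the far-left and far-right agents well; I expect to argue that $\text{OPT} \ge \max(x_r - d, \text{(second-order terms)})$ and that the per-agent mechanism cost $x_r - x_l - d$ is at most $\text{OPT}$ by a short case analysis on whether $|x_l|$ or $x_r - d$ dominates — if $|x_l| \ge x_r - d$ then... actually since $-d < x_l < 0$ we have $|x_l| < d$, so $x_r - x_l - d < x_r$, and if any agent is right of $x_l + d$ then $x_r > x_l + d$ so $\text{OPT} \ge x_r - (x_l+d)\cdot 0$... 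I will need to also invoke that OPT must pay for the rightmost agent roughly $x_r - d$. The tightness (matching the earlier $\frac{n+2}{4}$ lower bound only asymptotically) suggests the $(n-1)$ bound need not be tight, so I can afford a somewhat lossy final estimate. I would close by remarking that combining this with Lemma~\ref{lemma:strong1} yields the theorem.
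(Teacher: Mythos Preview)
Your proposal contains a genuine error in the treatment of the case $x_l \le -d$. You claim that when $x_l \le -d$ the mechanism's output $(-d,0)$ is optimal, arguing that ``every agent to the right of $-d$ is handled at least as well as by any range with left endpoint $\ge -d$.'' This is false: take one agent at $-d$ and $n-1$ agents at $d$; then $x_l=-d$, the mechanism outputs $(-d,0)$ with social cost $(n-1)d$, while the optimal range $(0,d)$ has social cost $d$. The ratio here is exactly $n-1$, not $1$. You seem to have confused the hypothesis $x_l\le -d$ with the hypothesis $x_r\le 0$ (all agents on the left), which is what actually makes $(-d,0)$ optimal. The case $x_l\le -d$ must be analyzed non-trivially, and the paper does so as its Case~1.

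The other gap is that in your ``main case'' $-d<x_l<0$ you correctly identify the target inequality $x_r-x_l-d\le \text{OPT}$ but never establish it; your attempts (``$\text{OPT}\ge x_r-(x_l+d)\cdot 0$'', ``OPT must pay for the rightmost agent roughly $x_r-d$'') are either garbled or wrong. The paper's key step, which you are missing, is the clean global lower bound $\text{OPT}\ge x_r-x_l-d$ whenever $x_l<0<x_r$ and $x_r-x_l>d$: for any feasible range $(a,a+d)$ with $-d\le a\le 0$, the combined cost of the two extreme agents at $x_l$ and $x_r$ is $\max(0,a-x_l)+\max(0,x_r-a-d)\ge x_r-x_l-d$. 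Once you have this single inequality, both nontrivial cases are immediate: each of the at most $n-1$ agents not at $x_l$ has mechanism cost at most $x_r-x_l-d$ (in Case~2 because the range covers $0$ and $x_l$, so only agents right of $x_l+d$ pay, each at most $x_r-(x_l+d)$; in Case~1 because any agent's cost under $(-d,0)$ is at most $\max(-x_l-d,\,x_r)\le x_r-x_l-d$), and summing gives $(n-1)\cdot\text{OPT}$.
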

\begin{proof}
 The strong group strategyproofness is proven in Lemma \ref{lemma:strong1}. We consider the approximation ratio for the social cost. Obviously when $x_l \ge 0$, or $x_r\le 0$, or $x_r - x_l\le d$, the solution is always optimal. We only consider the other two cases with $x_l < 0$ and $x_r - x_l > d$. Denote by $\text{OPT}$ the optimal social cost. It is easy to see that the optimal social cost is at least $x_r-x_l-d$.

    \textbf{Case 1}. $x_l < -d$. The mechanism returns $(-d, 0)$. We have
    \begin{align*}
        \text{SC}(\mathbf x,-d,0) &\le (n-1)(-d-x_l) + (n-1)x_r\\
        &= (n-1)(x_r - x_l -d)\\
        &\le (n-1)\cdot\text{OPT}.
    \end{align*}

    \textbf{Case 2}. $-d\le x_l \le 0$. The mechanism returns $(x_l, x_l + d)$. The cost of agents located in the interval $[x_l,x_l+d]$ is 0,
    and the cost of any agent is at most $x_r-(x_l+d)$. Then we have
    \begin{align*}
       \text{SC}(\mathbf x,x_l,x_l+d) &\le (n-1)(x_r-x_l-d)\\
        &\le (n-1)\cdot\text{OPT}.
    \end{align*}

     Therefore, in both cases, the social cost derived by Mechanism \ref{mec:soc-strong} is at most $n-1$ times the optimal social cost.
\end{proof}

The analysis of the approximation ratio of Mechanism  \ref{mec:soc-strong}  is tight. Consider the location profile where one agent is located at $-d$ and the other $n-1$ are located at $d$. Since $x_l=-d$, the output of the mechanism is $(-d, 0)$, which induces a social cost equal to $(n-1)d$. The optimal solution is $(0,d)$, and the optimal social cost is $d$.  Thus, the social cost induced by the mechanism is $n-1$ times the optimum.

Although this linear approximation ratio $n-1$ looks large, we show that it is indeed the asymptotically best possible result we could expect to achieve for all SGSP mechanisms. We give the following inapproximability result, which excludes the possibility of designing a mechanism with a sublinear approximation ratio.

\begin{theorem}
    No deterministic strong group strategyproof mechanism has an approximation ratio less than $\frac{n+2}{4}$ for the social cost. 
\end{theorem}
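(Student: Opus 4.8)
The plan is to exhibit, for every deterministic SGSP mechanism $f$, a family of location profiles on which $f$ is forced to pay at least $\frac{n+2}{4}$ times the optimum. I would take $d=1$ throughout and build profiles out of two "clusters'': a left cluster near $-1$ and a right cluster near $0$ or near $d=1$. The starting profile is the one already used at the end of the previous subsection, or a balanced variant of it: say $k$ agents at $-d$ and $n-k$ agents at $d$ for a suitable $k\approx n/2$. On such a profile the optimum is small (one of $(-d,0)$ or $(0,d)$ gives social cost $\le (n-k)d$ or $\le kd$), while SGSP will be shown to force the mechanism onto the "wrong'' side.

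The heart of the argument is a group-manipulation chain. First I would observe that on any profile where all agents sit at $-d$ or at $d$, SGSP essentially forces $f$ to output an interval of the form $(a,a+1)$ with $a\in\{-d,0\}$ (up to Lemmas~\ref{lemma:basic1} and~\ref{lemma:basic2}, any output is weakly dominated by one of these two, and if $f$ ever returned something strictly between, the agents at $-d$ or the agents at $d$ as a coalition could all weakly improve and at least one strictly improve by jointly shifting to force the cleaner endpoint — contradicting SGSP). So WLOG $f$ returns either $(-d,0)$ or $(0,d)$ on each such profile. Next I would run a sliding/threshold argument: start from the all-at-$-d$ profile (where $f$ must return $(-d,0)$, since that is every left agent's ideal and any other output lets the whole group weakly-improve to it) and move agents one at a time from $-d$ to $d$. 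At the all-at-$d$ end $f$ must return $(0,d)$. Hence there is a threshold profile $P^\*$ with $j$ agents at $d$ on which $f$ returns $(-d,0)$ but on the next profile $P^{**}$ with $j+1$ agents at $d$ it returns $(0,d)$. SGSP applied across this single step (one agent moving, or the remaining left agents as a coalition) pins down $j$: the agents at $d$ in $P^{**}$ must not be able to force $(-d,0)$ and the agents at $-d$ in $P^\*$ must not be able to force $(0,d)$, which forces $j\approx (n-2)/2$ or so. Then on $P^\*$ itself the mechanism pays roughly $j\cdot d$ (the $j$ agents at $d$ each cost $d$ through endpoint $0$, actually cost $d$) while the optimum is roughly $(n-j)\cdot$ something much smaller — carefully, put the "other'' cluster at a location making OPT a single unit — giving the ratio $\frac{n+2}{4}$ after optimizing $j$.

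To make the optimum genuinely tiny relative to $f$'s cost, I expect to need a slightly more refined profile than all-at-$\pm d$: e.g. $j$ agents at $d$, one agent at $-d$, and $n-j-1$ agents at $-\epsilon$ (as in the non-SGSP example for Mechanism~\ref{mec:sc1}), so that $(-\epsilon,1-\epsilon)$ is near-optimal with social cost $\approx j\cdot(d-1)+ \ldots$ — and symmetrically a mirrored profile. The point is that SGSP cannot "see'' the $n-j-1$ agents clustered just left of $0$ the way $\inf D$ does, so $f$ is stuck with the coarse $x_l$-type behavior and overpays by a factor growing with $n$. I would then set up two such profiles that look the same to $f$ on the decisive coalition but have optima on opposite sides, and conclude by the pigeonhole/threshold reasoning above that $f$ errs by the claimed factor on at least one of them; optimizing the cluster sizes yields exactly $\frac{n+2}{4}$.

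The main obstacle, I expect, is the bookkeeping in the threshold step: I must verify that the SGSP inequalities across consecutive profiles genuinely force the threshold index $j$ into the narrow window that produces $\frac{n+2}{4}$ (rather than merely some $\Omega(n)$ bound), and that the near-optimal auxiliary profile really has optimum $\Theta(d)$ while $f$'s output — determined by the coarse $x_l$-driven behavior that SGSP enforces — has cost $\Theta(nd)$. Handling parity of $n$ (even vs.\ odd) and the exact placement of the small-$\epsilon$ cluster to tighten the constant to $\frac{n+2}{4}$ rather than, say, $\frac{n}{4}$, is the delicate part; the strategyproofness-style exchange arguments themselves are routine given Lemmas~\ref{lemma:basic1} and~\ref{lemma:basic2}.
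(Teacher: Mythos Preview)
Your proposal has a genuine gap at its first step. You assert that on profiles with all agents at $\pm d$, SGSP ``essentially forces $f$ to output an interval $(a,a+d)$ with $a\in\{-d,0\}$,'' arguing that otherwise a coalition could weakly improve to a cleaner endpoint. But SGSP is a non-manipulability condition, not a Pareto-efficiency condition: to rule out an output like $(-d/2,d/2)$ on the balanced profile you would have to exhibit a concrete misreport by some coalition that the mechanism maps to a better range for every member, and there is no reason such a misreport exists for an arbitrary SGSP $f$. Without this reduction your threshold argument never starts. Even granting the dichotomy, the threshold step does not pin down $j$: the single agent who moves between $P^{*}$ and $P^{**}$ has cost $0$ under both outputs, so SP across that step is vacuous, and for the coalition argument you again need a specific successful misreport you have not produced. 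Finally, on a pure $\pm d$ profile the ratio is $j/(n-j)$ (or its reciprocal), which is large only when $j$ is \emph{far} from $n/2$, opposite to what you want; the $\epsilon$-cluster profiles are indeed needed, but you have not connected them to the threshold machinery.

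The paper bypasses all of this by never trying to pin down $f(\mathbf{x}_1)$. It fixes three profiles: the balanced $\mathbf{x}_1$ ($\tfrac{n}{2}$ at $-d$, $\tfrac{n}{2}$ at $d$) and two mirror $\epsilon$-profiles $\mathbf{x}_2$ (one at $-d$, $\tfrac{n}{2}-1$ at $-\epsilon$, $\tfrac{n}{2}$ at $d$) and $\mathbf{x}_3$ (symmetric). Writing $f(\mathbf{x}_1)=(a,b)$ and assuming for contradiction that the ratio is below $\tfrac{n+2}{4}$, the cases $a>0$ and $b<0$ are dispatched by having the whole population misreport to a single point. In the remaining case $a\le 0\le b$, SGSP is applied twice---the $d$-cluster of $\mathbf{x}_1$ misreporting into $\mathbf{x}_3$, and symmetrically for $\mathbf{x}_2$---to force $f(\mathbf{x}_2)=f(\mathbf{x}_3)=(a,b)$ exactly (not just one of two values). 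The optimum on each of $\mathbf{x}_2,\mathbf{x}_3$ is $\approx d$, while the mechanism's social costs are $d+a+\tfrac{n}{2}(d-b)$ and $d-b+\tfrac{n}{2}(d+a)$; the key step is the averaging bound
\[
\max\bigl(\mathrm{SC}(\mathbf{x}_2,a,b),\,\mathrm{SC}(\mathbf{x}_3,a,b)\bigr)\ \ge\ \tfrac12\bigl(\mathrm{SC}(\mathbf{x}_2,a,b)+\mathrm{SC}(\mathbf{x}_3,a,b)\bigr)\ \ge\ \tfrac{n+2}{4}\,d,
\]
using only $b-a\le d$. The crucial idea you are missing is this symmetry-plus-averaging trick: instead of forcing $f$ to a specific output and finding one bad profile, you force $f$ to the \emph{same} unknown output $(a,b)$ on two mirrored profiles and show that any fixed $(a,b)$ is bad for at least one of them.
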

\begin{proof}
Let $f$ be a deterministic strong group strategyproof mechanism with an approximation ratio less than $\frac{n+2}{4}$. We consider three location profiles $\mathbf{x}_1, \mathbf{x}_2, \mathbf{x}_3$ with 
even number $n$ of agents. In $\mathbf{x}_1$, $\frac{n}{2}$ agents are located at $-d$ and the other $\frac{n}{2}$ are located at $d$. In $\mathbf{x}_2$, one agent is located at $-d$, $\frac{n}{2}-1$ agents are located at $-\epsilon$ with sufficiently small $\epsilon>0$, and other $\frac{n}{2}$ are located at $d$. In $\mathbf{x}_3$, one agent is located at $d$, $\frac{n}{2}-1$ agents are located at $\epsilon$ and other $\frac{n}{2}$ are  at $-d$. 

We consider $f(\mathbf{x}_1)=(a,b)$ and discuss three cases.

\textbf{Case 1}. $a\le 0\le b$. Consider profile $\mathbf x_3$ and the solution returned by $f$ is $f(\mathbf x_3)=(a',b')$. By the approximation ratio, it is known that $a'<0$ and $b'\le d$. If $|[0,b]|<|[0,b']|$, then the $\frac n2$ agents located at $d$ in $\mathbf x_1$ can collude to misreport so that the profile becomes $\mathbf x_3$ and the solution becomes $(a',b')$. Then, the agents at $d$ in $\mathbf x_1$ strictly decrease their cost, giving a contradiction. If $|[0,b]|>|[0,b']|$, then the $\frac n2$ agents located at the positive half axis in $\mathbf x_3$ can collude to misreport so that the profile becomes $\mathbf x_1$ and the solution becomes $(a,b)$. Then the agent at $d$ in $\mathbf x_3$ strictly decreases their cost, giving a contradiction. Hence, it must be  $|[0,b]|=|[0,b']|$, i.e., $b=b'$. 

Furthermore, it is easy to see that $a=a'$, as otherwise all of the agents may collude together. It indicates that $f(\mathbf x_3)=f(\mathbf x_1)$. Let us suppose what may happen if $a\ne a'$. If $a'>a$, then agents in $\mathbf{x}_3$ will misreport locations to $\mathbf{x}_1$ so that all agents at $-d$ can reduce their costs and others will not increase costs. If $a'<a$, then agents in $\mathbf{x}_1$ will misreport locations to $\mathbf{x}_3$ so that all agents at $-d$ can reduce their costs and others will not increase costs.

 By a symmetric analysis on profile $\mathbf x_2$ instead of $\mathbf x_3$, we can conclude that  $f(\mathbf x_2)=f(\mathbf x_1)$. For profile $\mathbf{x}_2$, the optimal solution is $(0, d)$, and for profile $\mathbf{x}_3$, the optimal solution is $(-d, 0)$. The optimal social cost are both equal to $\left(\frac{n}{2}-1\right) \cdot\epsilon + d$.
The approximation ratio of $f$ is at least the maximum of the two ratios for profile $\mathbf{x}_3$ and profile $\mathbf{x}_2$, which is 
\begin{align*}
&\frac{\max\left(\text{SC}(\mathbf x_3,f(\mathbf x_3)),\text{SC}(\mathbf x_2,f(\mathbf x_2)) \right)}{\left(\frac{n}{2}-1\right) \cdot\epsilon + d}\nonumber\\
= ~&\frac{\max\left(d-b+\frac{n}{2}\cdot (d+a), d+a+\frac{n}{2}\cdot (d-b)\right)}{\left(\frac{n}{2}-1\right) \cdot\epsilon + d}\\
 = ~&\frac{d+\frac n2d+\max\left(\frac n2a-b,a-\frac n2b\right)}{\left(\frac{n}{2}-1\right) \cdot\epsilon + d}\\
 \ge ~&\frac{d+\frac n2d+(\frac n2+1)(a-b)/2}{\left(\frac{n}{2}-1\right) \cdot\epsilon + d}\\
  \ge ~&\frac{d+\frac n2d-(\frac n2+1)d/2}{\left(\frac{n}{2}-1\right) \cdot\epsilon + d}
  \rightarrow \frac{n+2}{4}.
\end{align*}

\textbf{Case 2}. $a>0$. Obviously, the cost of the agent at $d$ in $\mathbf x_1$ is larger than 0.  Suppose all of the $n$ agents in $\mathbf x_1$ collude to form a location profile where all agents are located at $d$. Since the optimal social cost for this new profile is 0, by the approximation ratio, the solution returned by $f$ must be $(0,d)$. Therefore, this manipulation decreases the cost of the agents at $d$ in $\mathbf x_1$, and the cost of the agents at $-d$ does not change, which contradicts the strong group strategyproofness.

\textbf{Case 3}. $b<0$. This case is symmetric with Case 2.
\end{proof}

\section{The Maximum Cost}\label{sec:max}

In this section, we consider the maximum cost. In Section \ref{sec:optmc}, we characterize the optimal maximum cost and show that the mechanism that returns an optimal solution is not strategyproof. In Section \ref{sec:maxdec}, we present a group strategyproof mechanism that is 2-approximation for the maximum cost. We also show that no deterministic mechanism can beat this ratio by proving a matching lower bound. 

\subsection{The Optimal Maximum Cost}\label{sec:optmc}

We first characterize the optimal maximum cost.
When all agents are located on one side of the facility,  an optimal solution is easy to find by setting one endpoint of the solution on the facility location $0$. When there are agents on both sides, i.e., $x_l<0$ and $x_r>0$, an optimal solution balances the cost of the ``extreme'' agents at $x_l$ and $x_r$ so that they have equal costs. 

\begin{proposition}\label{prop:optmax}
The optimal maximum cost is $\max(0, x_r-d)$ if $x_l\ge 0$, and $\max(0, -x_l-d)$  if $x_r\le 0$. When $x_l < 0 < x_r$, the optimal maximum cost is 
$$
\begin{cases}
   x_r - d, &\text{~if~} x_l + x_r > d;\\
   -x_l - d, &\text{~if~} x_l + x_r < -d;\\
   \max\left(0, \frac{x_r - x_l - d}{2}\right), &\text{~if~} -d \le x_l + x_r \le d.
\end{cases}
$$
\end{proposition}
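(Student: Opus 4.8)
The plan is to compute, for each configuration of agent positions relative to the facility, the value $\min_{a,b:|a-b|\le d}\mathrm{MC}(\mathbf x,a,b)$ by reasoning about which range is best. Throughout I will use Lemma \ref{lemma:basic2} to assume the range has full length $d$, so a range is determined by its left endpoint $a$ and equals $(a,a+d)$.

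\textbf{One-sided cases.} If $x_l\ge 0$ (all agents weakly right of the facility), take the range $(0,d)$. Every agent $i$ with $x_i\le d$ then has cost $0$, and every agent with $x_i>d$ has cost $x_i-d$; so $\mathrm{MC}=\max(0,x_r-d)$. To see this is optimal, note $c(x_i,a,a+d)\ge c(x_i,0,d)$ whenever $a\le 0$ by Lemma \ref{lemma:basic1}, and for $a>0$ the rightmost agent still pays at least $\max(0,x_r-d)$ (pushing the range right only helps the right agents up to the point where $a=0$ is already optimal for them, while the agent at $x_l=0\le x_r$... more simply: the agent at $x_r$ has cost $\ge x_r-\min(x_r,a+d)\ge x_r-d$ always, giving the matching lower bound $\max(0,x_r-d)$ on $\mathrm{MC}$ regardless of the range). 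The case $x_r\le 0$ is symmetric, giving $\max(0,-x_l-d)$.

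\textbf{Two-sided case $x_l<0<x_r$.} First I establish the universal lower bounds on $\mathrm{MC}$ for any range $(a,a+d)$: the agent at $x_r$ pays $\ge x_r-d$ (if the range even reaches him it costs $\ge x_r-(a+d)$ and $a+d\le x_r$ forces... actually the clean bound is $c(x_r,a,a+d)\ge x_r-d$ always, and similarly $\ge x_r - (a+d)$ when $a\ge 0$); the agent at $x_l$ pays $\ge -x_l-d$ always, and $\ge -x_l+a$... Let me organize by which of the three subcases of the proposition we are in. If $x_l+x_r>d$: I claim the optimum is $x_r-d$, achieved by the range $(x_r-d,x_r)$ (cost of $x_l$ is $-x_l$ if $x_r-d>0$... need $x_r-d\ge 0$ which follows since $x_r>x_r-d$ and $x_l+x_r>d$ with $x_l<0$ gives $x_r>d$; cost of $x_l$ is $-x_l-(\min(\cdot)-a)=-x_l$ if the range lies entirely right of $0$, hmm but then $x_l$'s cost is $-x_l$ which could exceed $x_r-d$). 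Here I must be more careful: the optimal range in this subcase is $(0,d)$, giving $x_l$ cost $-x_l$ and $x_r$ cost $x_r-d$; since $x_l+x_r>d$ means $-x_l<x_r-d$, the max is $x_r-d$, and the matching lower bound is the universal $c(x_r,\cdot)\ge x_r-d$. Symmetrically for $x_l+x_r<-d$, the range $(-d,0)$ is optimal with value $-x_l-d$. For $-d\le x_l+x_r\le d$: if $x_r-x_l\le d$ we can place a length-$d$ range covering $[x_l,x_r]$, so $\mathrm{MC}=0$; otherwise choose $a$ so the range $(a,a+d)$ is centered to balance the two extreme agents, i.e., set $-x_l+a = x_r-(a+d)$ giving $a=\frac{x_r+x_l-d}{2}$, a value in $[-d,0]$ precisely because $-d\le x_l+x_r\le d$ and $x_r-x_l\ge d$, and the common cost is $\frac{x_r-x_l-d}{2}>0$; the lower bound follows because for any range some extreme agent pays at least this balanced value (a short case analysis on whether the range's endpoints are positive or negative, using Lemma \ref{lemma:basic1} to reduce to $a\in[-d,0]$ and then the two extreme costs $-x_l+a$ and $x_r-a-d$ whose max is minimized at the balance point).

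\textbf{Main obstacle.} The routine part is the one-sided cases and the achievability (exhibiting the explicit range). The delicate part is the lower bound in the two-sided case, specifically verifying that no range can do better than the claimed value — this requires confirming that in subcase $x_l+x_r\in[-d,d]$ the balanced choice $a=\frac{x_l+x_r-d}{2}$ really lies in the feasible window $[-d,0]$ and that moving $a$ either direction hurts one of the two extreme agents by at least as much as it helps the other, and in the boundary subcases ($|x_l+x_r|>d$) checking that the balance point would fall outside $[-d,0]$ so the constrained optimum sits at $a=0$ or $a=-d$. I would handle this by writing $g(a)=\max(-x_l+a,\,x_r-a-d)$ for $a\in[-d,0]$ (having reduced to this interval via Lemmas \ref{lemma:basic1}--\ref{lemma:basic2}), observing $g$ is convex piecewise-linear with minimum at the crossover $a^\ast=\frac{x_l+x_r-d}{2}$, and evaluating $g$ at $\max(-d,\min(0,a^\ast))$, which yields exactly the three cases of the proposition.
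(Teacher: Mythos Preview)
Your argument is correct and follows essentially the same route as the paper: treat the one-sided cases directly, and in the two-sided case split according to the sign of $x_l+x_r-d$ and $x_l+x_r+d$, exhibiting the ranges $(0,d)$, $(-d,0)$, or the balanced range $\bigl(\tfrac{x_l+x_r-d}{2},\tfrac{x_l+x_r+d}{2}\bigr)$ as appropriate. The paper's proof is terser and argues optimality in the balanced subcase by a local-improvement sentence (``otherwise move the range toward the agent with larger cost''), whereas your final paragraph recasts the same content as minimizing the convex piecewise-linear function $g(a)=\max(-x_l+a,\;x_r-a-d)$ over $a\in[-d,0]$ and reading off the three regimes from where the unconstrained minimizer $a^\ast=\tfrac{x_l+x_r-d}{2}$ sits relative to $[-d,0]$; this is a slightly more systematic packaging of the identical idea and makes the lower bounds explicit. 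One small cleanup: your $g$ as written omits the truncation at $0$ (the cost of $x_l$ is $\max(0,-x_l+a)$, not $-x_l+a$), but since you separately dispose of the case $x_r-x_l\le d$ and otherwise at least one branch of the max is positive, this does not affect the conclusion.
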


\begin{proof}
    When $x_l\ge 0$, an optimal solution is $(0,d)$, and the maximum cost is attained by the agent at $x_r$, that is, $MC(\mathbf x,0,d)=\max(0, x_r-d)$.  When $x_r\le 0$, an optimal solution is $(-d,0)$, and the maximum cost is attained by the agent at $x_l$, that is, $MC(\mathbf x,-d,0)=\max(0, -x_l-d)$. 

    Then we consider the case when $x_l<0<x_r$. When $x_l+x_r>d$,  the distance between the agent at $x_r$ and the facility is so large that even under the solution $(0,d)$, the cost of the agent at $x_r$ is still larger than the cost of the agent at $x_l$. Hence, the maximum cost is always achieved by the agent at $x_r$, and the optimal maximum cost is $x_r-d$.  When $x_l+x_r<-d$, by symmetry, the maximum cost is always achieved by the agent at $x_l$, and the optimal maximum cost is $|x_l|-d$. 
    
When $ -d \le x_l + x_r \le d$, if $x_r-x_l \le d$, then $(x_l,x_r)$ is a feasible solution in which all agents have zero cost because everyone as well as the facility lies between the two extreme locations $x_l$ and $x_r$, indicating that the optimal maximum cost is 0.  If $x_r-x_l>d$, to minimize the maximum cost, an optimal solution should guarantee that the agents at $x_l$ and $x_r$ have equal cost, {as otherwise the larger cost of them is the maximum cost and we can move the range toward that agent to decrease it.} This situation is attained by  the solution $(x_l+c_0, x_r-c_0)$ with $c_0=\frac{x_r-x_l-d}{2}$, where both extreme agents have a cost equal to $c_0$.
\end{proof}

 The mechanism that returns an optimal solution is not strategyproof. Consider the location profile where one agent is at $-d$, and the other agent is at $d$. Such a mechanism returns the optimal solution $(-\frac d2,\frac d2)$, and both agents have a cost equal to $\frac d2$. Suppose that the agent at $d$ misreports the location as $1.4d$. Then, the outcome of this mechanism becomes $(-0.3d,0.7d)$, i.e., the unique optimal solution for the new instance. The cost of the agent at $d$ decreases to $0.3d$ after misreporting, giving a contradiction.

\subsection{GSP Mechanisms}\label{sec:maxdec}

We give a simple GSP mechanism that achieves 2-approximation for the maximum cost. 

\begin{mechanism}\label{mec:max1}
    Given location profile $\mathbf x$, return an interval $(\min(x_l,0),\min(x_l,0)+d)$.
\end{mechanism}

\begin{lemma}\label{lemma:mc1-1}
    Mechanism \ref{mec:max1} is group strategyproof.
\end{lemma}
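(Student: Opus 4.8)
The plan is to exploit the structural fact that the output of Mechanism \ref{mec:max1} is always a length‑$d$ interval $(a,a+d)$ whose left endpoint $a=\min(x_l,0)=:x_l^{*}$ depends on the reported profile only through the reported leftmost location. Thus the whole analysis reduces to (i) understanding how $c(x_i,a,a+d)$ varies with $a$, and (ii) controlling how a coalition can move $x_l^{*}$.

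\emph{Step 1 (a monotonicity lemma).} I would first record the formula $c(x_i,a,a+d)=\min\bigl(|x_i|,\ \mathrm{dist}(x_i,[a,a+d])+\mathrm{dist}(0,[a,a+d])\bigr)$, which holds because an agent either walks straight to $0$ or enters the range at its nearest point, rides for free, and leaves from the point of the range nearest $0$. Then I claim: for a fixed $x_i$ and $a\le\min(0,x_i)$, the map $a\mapsto c(x_i,a,a+d)$ is non-increasing. The proof is a short case split on whether $a+d\ge 0$ (the range contains the facility) or $a+d<0$ (the range lies strictly left of $0$): in the first case $\mathrm{dist}(x_i,[a,a+d])+\mathrm{dist}(0,[a,a+d])=\max(0,x_i-a-d)$, clearly non-increasing in $a$; in the second it equals $\max(0,x_i-a-d)+(-a-d)$, and both summands decrease as $a$ grows; taking the minimum with the constant $|x_i|$ preserves monotonicity.

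\emph{Step 2 (the leftmost agent is pinned to its optimum).} Combining Step 1 with Lemmas \ref{lemma:basic1} and \ref{lemma:basic2}, I would show that the agent located at $x_l$ attains its minimum possible cost under the interval $(x_l^{*},x_l^{*}+d)$ returned by the mechanism: Lemma \ref{lemma:basic2} restricts attention to length‑$d$ ranges; Lemma \ref{lemma:basic1} discards ranges with left endpoint $>0$; Step 1 handles all ranges with left endpoint $\le x_l^{*}$; and a one‑line check handles the remaining ranges with $x_l^{*}<a\le 0$, where the agent lies to the left of the range so its cost is $a-x_l$ (or the constant $-x_l-d$ in the subcase $x_l<-d$), which is at least its cost at $a=x_l^{*}$.

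\emph{Step 3 (coalition argument).} Fix a true profile $\mathbf x$ and a deviation by $A\subseteq N$ producing output $(a',a'+d)$, where $a'=\min(x_l',0)\le 0$ and $x_l'$ is the reported minimum. If $a'=x_l^{*}$ the output is unchanged and nobody gains. If $a'<x_l^{*}$, then every agent satisfies $x_i\ge x_l\ge x_l^{*}>a'$, so $a'\le\min(0,x_i)$ and Step 1 gives $c(x_i,a',a'+d)\ge c(x_i,x_l^{*},x_l^{*}+d)$, i.e.\ no member of $A$ is strictly better off. If $a'>x_l^{*}$, this can occur only when $x_l^{*}=x_l<0$, and it forces $\min_{i\notin A}x_i>x_l$, i.e.\ every agent sitting at $x_l$ belongs to $A$; by Step 2 such an agent already pays the minimum possible cost and cannot be made strictly better off. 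In every case some member of $A$ fails to strictly benefit, contradicting a violation of group strategyproofness. The main obstacle I anticipate is precisely the rightward branch $a'>x_l^{*}$: one must argue cleanly that the only way a coalition can drag the interval to the right is by including all extreme‑left agents, and then invoke the (slightly delicate, because of the $x_l<-d$ subcase) fact from Step 2 that those agents are already at their personal optimum; the monotonicity lemma itself, though elementary, also needs the careful split between the range containing the facility and the range lying entirely to its left.
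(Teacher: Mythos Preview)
Your argument is correct and follows essentially the same strategy as the paper's proof: show that the agent at $x_l$ already attains its minimum possible cost, and that any coalition not containing this agent can only push the interval (weakly) to the left, which cannot help any member. The paper dispatches both points in two sentences without justification; your Steps~1--2 supply the missing monotonicity and optimality arguments (including the $x_l<-d$ subcase) that the paper leaves implicit, so your write-up is a more rigorous version of the same idea rather than a different route.
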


\begin{proof}
    First, note that the agent located at $x_l$ has no incentive to misreport because this agent achieves the best possible cost under the solution
$(\min(x_l, 0), \min(x_l, 0) + d)$. When other agents misreport their locations simultaneously, the solution would be either the same or to the left of the original solution, which cannot benefit any of these agents.
\end{proof}

\begin{theorem}\label{theorem:2}
    Mechanism \ref{mec:max1} is group strategyproof and $2$-approximation for the maximum cost.
\end{theorem}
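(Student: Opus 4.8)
The plan is to show two things: Mechanism~\ref{mec:max1} is GSP, which is already established in Lemma~\ref{lemma:mc1-1}, and that its maximum cost is at most twice the optimal maximum cost computed in Proposition~\ref{prop:optmax}. So the real work is the approximation bound. I would split into the same cases as the optimality characterization. If $x_l\ge 0$, the mechanism returns $(0,d)$, which is exactly optimal, so the ratio is $1$. If $x_l<0$, the mechanism returns $(x_l,x_l+d)$, and the agent at $x_l$ has cost $0$; the maximum cost is then achieved by the agent at $x_r$ and equals $\max(0,x_r-x_l-d)$. So it suffices to compare $\max(0,x_r-x_l-d)$ against the optimal values from Proposition~\ref{prop:optmax}.

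First I would dispose of the trivial subcases: if $x_r\le 0$ the solution $(x_l,x_l+d)$ is optimal (it has the endpoint-at-$0$ form up to Lemma~\ref{lemma:basic1}), and if $x_r-x_l\le d$ the mechanism achieves cost $0$ which is optimal. So assume $x_l<0<x_r$ and $x_r-x_l>d$; the mechanism's cost is $x_r-x_l-d>0$. Now consider the three regimes for $x_l+x_r$. When $-d\le x_l+x_r\le d$, the optimal maximum cost is $\frac{x_r-x_l-d}{2}$, so the ratio is exactly $2$ — this is the tight case. When $x_l+x_r>d$, the optimum is $x_r-d$, and I need $x_r-x_l-d\le 2(x_r-d)$, i.e. $-x_l-d\le x_r-d$... more directly, $x_r-x_l-d \le 2(x_r-d)$ rearranges to $d\le x_r+x_l$... wait, let me recompute: $x_r-x_l-d\le 2x_r-2d$ iff $d-x_l\le x_r$ iff $x_l+x_r\ge d$, which holds. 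Symmetrically, when $x_l+x_r<-d$ the optimum is $-x_l-d$ and $x_r-x_l-d\le 2(-x_l-d)$ iff $x_r\le -x_l-d$ iff $x_l+x_r\le -d$, which holds. Hence the ratio is at most $2$ in every case.

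The main obstacle is mostly bookkeeping: making sure the mechanism's returned cost is correctly identified as $\max(0,x_r-x_l-d)$ in all subcases (in particular handling $-d\le x_l<0$ versus $x_l<-d$, though both give the same interval form $(x_l,x_l+d)$ and the same agent-at-$x_l$ cost $0$), and making sure the $\max(0,\cdot)$ in Proposition~\ref{prop:optmax} does not cause the ratio to blow up when the optimum is $0$ — but whenever the optimum is $0$, the mechanism's cost is also $0$ (that is exactly the $x_r-x_l\le d$ or one-sided subcase), so no division-by-zero issue arises. I would present this as a short case analysis keyed to Proposition~\ref{prop:optmax}, concluding that $\text{MC}(\mathbf x,\min(x_l,0),\min(x_l,0)+d)\le 2\cdot\text{OPT}$, and combine with Lemma~\ref{lemma:mc1-1} to finish.
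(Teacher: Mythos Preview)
Your approach matches the paper's: split into the regimes of Proposition~\ref{prop:optmax} and compare the mechanism's maximum cost to the optimum in each. There is, however, one concrete error. You assert that when $x_l<0$ the agent at $x_l$ has cost $0$ under $(x_l,x_l+d)$, and you explicitly repeat this for the subcase $x_l<-d$. That is false: when $x_l<-d$ the interval $(x_l,x_l+d)$ lies entirely to the left of the facility, so the agent at $x_l$ still has to cover the gap from $x_l+d$ to $0$ and incurs cost $-x_l-d>0$. Correspondingly the agent at $x_r>0$ has cost $x_r$ (not $x_r-x_l-d$), and the true maximum cost is $\max(-x_l-d,\,x_r)$, not your claimed $\max(0,x_r-x_l-d)$.

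Fortunately this does not break the argument: since both $x_r>0$ and $-x_l-d>0$ in that subcase, your value $x_r-x_l-d=x_r+(-x_l-d)$ strictly dominates $\max(-x_l-d,x_r)$, so it is a valid \emph{upper bound} on the mechanism's maximum cost, and your verifications that $x_r-x_l-d\le 2\cdot\text{OPT}$ in each of the three regimes still yield the $2$-approximation. The paper avoids the slip by writing $c(x_r,x_l,x_l+d)=x_r-\max(0,x_l+d)$ and $c(x_l,x_l,x_l+d)=\max(0,-x_l-d)$ and bounding the larger one directly. You should also tighten the $x_r\le 0$ subcase: Lemma~\ref{lemma:basic1} only says $(-d,0)$ is \emph{at least as good} as $(x_l,x_l+d)$, not equivalent; the reason $(x_l,x_l+d)$ is nonetheless optimal there is that the agent at $x_l$ has cost $\max(0,-x_l-d)$ under \emph{every} feasible range, so this already matches $\text{OPT}$.
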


\begin{proof}
 We only need to prove the approximation ratio.   If $x_l \ge 0$,  the mechanism returns $(0,d)$, and every agent attains their minimum possible cost, implying that the solution is optimal. If  $x_r \le 0$, then  the maximum cost is always attained by the agent at $x_l$, and  the solution $(x_l,x_l+d)$ returned by the mechanism is also optimal. We discuss different cases when $x_l < 0 < x_r$, and the mechanism returns $(x_l,x_l+d)$. Denote by OPT the optimal maximum cost. 

    \textbf{Case 1}. $x_l + x_r > d$. The optimal solution is $(0,d)$, and the optimal maximum cost is 
    $\text{OPT} = \text{MC}(\mathbf x, 0, d) = x_r - d$.
  Under the solution  $(x_l,x_l+d)$ returned by the mechanism,  the maximum cost must be attained by either the agent at $x_l$ or the agent at $x_r$. We have
    \begin{align*}c(x_r, x_l, x_l + d) &= x_r - \max(0, x_l + d) > x_r - d>-x_l\\
        &\ge \max(0, -x_l - d) = c(x_l, x_l, x_l + d).
    \end{align*}
Therefore, the maximum cost is
    \begin{align*}
       \text{MC}(\mathbf x,x_l, x_l + d) &=c(x_r, x_l, x_l + d)= x_r - \max(0, x_l + d)\\ &\le x_r - (x_l+d)
        \le x_r - x_l - d + (x_r - d)\\
        &= 2x_r - x_l - 2d
        \le 2(x_r - d)= 2\cdot \text{OPT}.
    \end{align*}
    It follows the 2-approximation of the mechanism in this case.

    \textbf{Case 2}. $x_l + x_r < -d$. The optimal solution is $(-d,0)$, and the optimal maximum cost is
    $\text{OPT} = \text{MC}(\mathbf x,-d, 0) = -x_l - d$.
  Under the solution  $(x_l,x_l+d)$ returned by the mechanism, the cost of the agent at $x_l$ is no less than the cost of the agent at $x_r$, that is, 
    \begin{align*}
        c(x_l, x_l, x_l + d) &= -x_l - d \ge \max(0, x_r - d)\\
        &= c(x_r, x_l, x_l + d).
    \end{align*}
Therefore, the maximum cost is
    \begin{align*}
        \text{MC}(\mathbf x,x_l, x_l + d) &=c(x_l, x_l, x_l + d)\\
        &= -x_l - d = \text{OPT}.
    \end{align*}
  Thus, the mechanism is optimal for the maximum cost in this case.

    \textbf{Case 3}. $-d \le x_l + x_r \le d$. By Proposition \ref{prop:optmax}, the  optimal maximum cost is $ \max\left(0, \frac{x_r - x_l - d}{2}\right)$. Precisely,  if $x_r-x_l \le d$, then every agent has zero cost under the solution $(x_l,x_l+d)$ returned by the mechanism. The maximum cost induced by the mechanism is 0, and thus it is optimal in this case. 
    
    If $x_r-x_l>d$, an optimal solution that minimizes the maximum cost guarantees that the agents at $x_l$ and $x_r$ have equal cost, which is equal to $\frac{x_r-x_l-d}{2}$.
  Under the solution  $(x_l,x_l+d)$ returned by the mechanism, the maximum cost is achieved by either the agent at $x_r$ or the agent at $x_l$. We note that
    \begin{align*}
        ~& c(x_r, x_l, x_l + d) = x_r - \max(0, x_l + d) \\
        \ge ~& d - x_l - \max(0, x_l + d) \ge \max(0, -x_l - d),
    \end{align*}
where  the last inequality comes from the discussion over two subcases: if $x_l + d\ge 0$, then
    \begin{align*}
        d-x_l-\max(0, x_l + d) &= -2x_l > 0= \max(0, -x_l - d),
    \end{align*}
    and if $x_l + d < 0$, then
    \begin{align*}
        d-x_l-\max(0, x_l + d) &= d - x_l> -d - x_l \\
        &= \max(0, -x_l - d).
    \end{align*}
Combining with  $c(x_l,x_l,x_l+d)=\max(0, -x_l - d)$, we have
    \begin{align*}
       \text{MC}(\mathbf x,x_l, x_l + d) &=c(x_r, x_l, x_l + d) = x_r - \max(0, x_l + d) \\
        &\le x_r - x_l - d = 2\cdot \frac{x_r - x_l - d}{2}= 2\cdot \text{OPT}.
    \end{align*}
    Therefore the mechanism is 2-approximation in this case.
\end{proof}

The analysis on the approximation ratio of Mechanism  \ref{mec:max1}  is tight. Consider the location profile $\mathbf x=(-d, d)$. It is clear that the optimal solution is $(\frac d2,\frac d2)$, and the optimal maximum cost is $\frac d2$, attained by both extreme agents. The solution returned by the mechanism is $(-d, 0)$, inducing a maximum cost equal to $c(x_r,-d,0)=c(d,-d,0)=d$, which is 2 times of the optimal maximum cost.

Next, we complete the results by proving a tight lower bound for all (group) strategyproof mechanisms. 

\begin{theorem}\label{theorem:3}
    For the maximum cost, no deterministic strategyproof mechanism has an approximation ratio of less than 2.
\end{theorem}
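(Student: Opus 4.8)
The plan is to exhibit a small family of instances (two agents suffice) such that any strategyproof mechanism is forced, by a sequence of manipulation arguments, to produce an outcome with maximum cost arbitrarily close to twice the optimum. I would normalize $d=1$ and focus on profiles of the form $\mathbf x=(-1,t)$ for $t$ ranging over a suitable interval, together with the extreme profile $\mathbf x_0=(-1,1)$, whose unique optimal solution by Proposition~\ref{prop:optmax} is $(-\tfrac12,\tfrac12)$ with optimal maximum cost $\tfrac12$.

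\textbf{Step 1: pin down the outcome on the symmetric profile.} Let $f$ be strategyproof with ratio $r<2$. Write $f(-1,1)=(a,b)$. I would first argue that $r<2$ already forces $c(x_i,a,b)<1$ for both agents, which confines $(a,b)$ to a bounded region; in particular $a<0<b$ (an interval lying entirely on one side of $0$ would make one agent pay at least $1$). So on $\mathbf x_0$ the costs are $c(-1,a,b)=\max(0,-1-a)$ and $c(1,a,b)=\max(0,1-b)$, and at least one of $-1-a$, $1-b$ is positive unless $(a,b)\supseteq[-1,1]$, which is infeasible since $|a-b|\le 1$.

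\textbf{Step 2: the manipulation chain.} Consider perturbing the right agent from $1$ to some $t>1$. On $(-1,t)$ with $t$ large (specifically $t>1+|a|$, or more simply $-1+t>1$ so we are in the regime $x_l+x_r>d$), Proposition~\ref{prop:optmax} gives optimal solution $(0,1)$ and optimal maximum cost $t-1$, which is small only if... it is not small; so instead I would push $t$ so that the optimum stays near $\tfrac12$ while the mechanism is trapped. The cleaner route: take the two profiles $\mathbf x_-=(-1,0)$-type and $\mathbf x_+$ symmetric, or better, use the continuum. The key lemma is the standard one for SP on the line: as the right agent moves, $c(x_r,f(\mathbf x))$ as a function of its report behaves monotonically in the right way, so the right endpoint $b$ of $f$'s output is a nondecreasing (weakly) function of the right agent's report, and symmetrically for the left. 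From $f(-1,1)=(a,b)$ with (WLOG by symmetry considerations, or by doing both cases) $1-b\ge \tfrac{r}{2}\cdot(\text{something})$ — here is where I would extract that since $\max(-1-a,1-b)\ge \tfrac12(-1-a+1-b)=\tfrac12(-a-b)\ge\tfrac12(1-|a-b|)\ge 0$, I need a lower bound on $-a-b$. Since $|a-b|\le 1$ and the interval is "centered" badly, I'd argue by contradiction: if both $-1-a<\tfrac12$ and $1-b<\tfrac12$ then $a>-\tfrac32$ and $b>\tfrac12$, combined with $b\le a+1<\tfrac12$, contradiction. Hence $\max(-1-a,1-b)\ge\tfrac12$... but that only shows cost $\ge\tfrac12=$ OPT, ratio $\ge1$, not enough. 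So a single profile does not suffice; I genuinely need the manipulation step.

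\textbf{Step 3: closing the argument.} The main obstacle — and the crux of the proof — is the manipulation step that upgrades the trivial bound to a factor approaching $2$. Here is the intended mechanism of it: suppose on $\mathbf x_0=(-1,1)$ the right agent pays $c_R=1-b$ and the left pays $c_L=\max(0,-1-a)$, with $\max(c_L,c_R)<\tfrac{r}{2}\cdot$ OPT... no. Rather, consider the profile $\mathbf y=(-1,\,b+\text{(small)})$ obtained by moving the right agent to just right of $b$: by strategyproofness applied to the right agent comparing reports $1$ and $b+\delta$, one shows $f(\mathbf y)$ still has right endpoint close to $b$ (the right agent cannot be allowed to gain by this move). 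But on $\mathbf y$, if $b$ is bounded away from $1$ — which happens in the case $c_R$ is bounded below — the optimal maximum cost shrinks toward $\max(0,(b+\delta-(-1)-1)/2)=(b)/2$ roughly, while $f$ still charges about $1-b$; pushing the configuration (and using the symmetric family on the left when $c_L$ dominates) drives the ratio toward $\frac{1-b}{b/2}$, and optimizing the adversary's choice over where $b$ must lie (it must satisfy $b\le a+1$ and $-1-a\le$ something) yields the clean bound $2$. I would carry this out in two symmetric cases according to whether $a>-1$ (left agent has zero cost, so right agent's cost dominates and I pressure from the right) or $a\le -1$-direction; since feasibility $b\le a+1$ ties them together, one of the two always gives ratio $\to 2$. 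The delicate point to get exactly right is the strategyproofness inequality bounding how far $b$ can move when the right agent's report moves, and ensuring the comparison instance's optimum is computed via the correct branch of Proposition~\ref{prop:optmax}; everything else is routine algebra.
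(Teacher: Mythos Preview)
Your high-level strategy is the right one and matches the paper's: take a two-agent profile, invoke strategyproofness to constrain the mechanism on a second profile, and exhibit a ratio approaching $2$ there. However, Step~3 has a real gap. On $\mathbf y=(-1,b+\delta)$ you assert that ``$f$ still charges about $1-b$'' and arrive at the ratio $\frac{1-b}{b/2}$. But the agents in $\mathbf y$ sit at $-1$ and $b+\delta$; location $1$ is no longer an agent. The SP inequality in the direction ``agent at $1$ deviates to $b+\delta$'' does yield $c(1,f(\mathbf y))\ge 1-b$, but that bounds the cost of a \emph{non-agent} point and says nothing about $\mathrm{MC}(\mathbf y,f(\mathbf y))$. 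The usable direction is the other one: the agent truly at $b+\delta$ cannot gain by reporting $1$, so $c(b+\delta,f(\mathbf y))\le c(b+\delta,a,b)=\delta$. Writing $f(\mathbf y)=(a',b')$, this forces (in the main case $a'\le 0\le b'$) $b'\ge b$ and hence $a'\ge b'-1\ge b-1>-1$, so the \emph{left} agent's cost is $1+a'\ge b$. That gives $\mathrm{MC}\ge b$ against $\mathrm{OPT}=(b+\delta)/2$, i.e.\ ratio $\to 2$ directly --- no ``optimizing over where $b$ must lie'' is needed. Your stated expression $\frac{1-b}{b/2}$ is not only unjustified but would fail to give $2$ whenever $b>\tfrac12$, and nothing in your sketch rules that out.

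The paper's argument avoids the mechanism-dependent perturbation altogether. It starts from $(-\tfrac d2-\epsilon,\tfrac d2+\epsilon)$, whose optimum is only $\epsilon$; hence some agent (say the right one) already has cost $\ge\epsilon$. Then, on the \emph{fixed} second profile $(-\tfrac d2-\epsilon,d)$, strategyproofness forces $\bigl|[a,b]\cap[0,\tfrac d2+\epsilon]\bigr|\le\tfrac d2$, so either $b\le\tfrac d2$ or $a\ge\epsilon$; in both cases the maximum cost is at least $\tfrac d2$, while the optimum there is $\tfrac d4+\tfrac\epsilon2$. Using a pair of profiles independent of $f$ makes the case analysis short and sidesteps exactly the bookkeeping that tripped you up.
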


\begin{proof}
    Let $f$ be any deterministic strategyproof mechanism.  Consider a location profile $\mathbf x=(-\frac d2-\epsilon,\frac d2+\epsilon)$, where $\epsilon>0$ is a sufficiently small value. Under any solution, there is at least one agent whose cost is at least $\epsilon$. Assume w.l.o.g. that under the solution $f(\mathbf x)$, the cost of agent 2 is at least $\epsilon$.

    Next, suppose that agent 2 misreports the location as $x_2'=d$, and the location profile becomes $\mathbf x'=(-\frac d2-\epsilon,d)$. Let $f(\mathbf x')=(a,b)$ be the outcome of the mechanism. By the strategyproofness of mechanism $f$, the cost of agent 2 cannot decrease, and we have $c(\frac d2+\epsilon,a,b)\ge \epsilon$.
        It immediately implies that 
    $$\left|\left[a,b\right]\cap [0,\frac d2+\epsilon]\right|= \frac d2+\epsilon - c\left(\frac d2+\epsilon,a,b\right) \le \frac d2.$$
    Then, since $b-a\le d$, we have either $b\le\frac d2$, or $a\ge \epsilon$. 
    
    For the location profile $\mathbf x'=(-\frac d2-\epsilon,d)$, if $b\le\frac d2$, the maximum cost of $\mathbf{x}'$ induced by the mechanism is at least $c(d,a,b)\ge \frac d2$. 
    If $a\ge \epsilon$, the maximum cost induced by the mechanism is at least $c(-\frac d2-\epsilon,a,b)\ge \frac d2+2\epsilon$. Hence, in both cases, the induced maximum cost induced by mechanism $f$ on location profile $\mathbf{x}'$  is at least $\frac d2$. On the the hand,  the optimal solution for $\mathbf x'$ is $(-\frac d4-\frac{\epsilon}{2},\frac{3d}{4}-\frac{\epsilon}{2})$, and the optimal maximum cost is $\frac d4+\frac{\epsilon}{2}$. Therefore, the approximation ratio approaches 2 when $\epsilon\rightarrow 0$. 
\end{proof}

Combining Theorem \ref{theorem:2} and Theorem \ref{theorem:3}, we can conclude that Mechanism  \ref{mec:max1}  is the best possible (group) strategyproof mechanism we could expect for the maximum cost.

\subsection{SGSP Mechanisms}

We notice that Mechanism \ref{mec:max1} is not strong group strategyproof. Consider the location profile $\mathbf x=(-2d,\ldots,-2d,\frac{-d}{2},\ldots,\frac{-d}{2})$ where $\frac n2$ agents are located at $-2d$, and the other $\frac n2$ agents are located at $\frac{-d}{2}$. Since $x_l=-2d$, the mechanism returns a range $(-2d,-d)$. The cost of agent at $-2d$ is $d$, and the cost of the agent at $\frac{-d}{2}$ is $\frac{d}{2}$. Suppose that all of the $n$ agents misreport their location as $-d$ simultaneously so that the reported location profile becomes $\mathbf x=(-d,-d,\ldots,-d)$. Then $x_l'=-d$, and the mechanism returns the solution $(-d,0)$. The cost of the agent at $-2d$ remains to be $d$, but the cost of the agent at $\frac{-d}{2}$ decreases to 0, giving a contradiction to the strong group strategyproofness.

Let us recall  Mechanism \ref{mec:soc-strong}, which is proven to be SGSP in Lemma \ref{lemma:strong1}. Note that the property of strategyproofness does not rely on the objective but only on the agent cost functions. Thus, Mechanism \ref{mec:soc-strong} is SGSP here.  In the following, we show that the two mechanisms have indeed the same approximation ratio for the maximum cost. 

\begin{theorem}\label{thm:strong3}
    Mechanism \ref{mec:soc-strong} is $2$-approximation for the maximum cost.
\end{theorem}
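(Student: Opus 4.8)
The plan is to bound the maximum cost induced by Mechanism~\ref{mec:soc-strong} against the optimal maximum cost characterized in Proposition~\ref{prop:optmax}, handling the same case split that the mechanism itself uses. First I would dispense with the easy regimes: if $x_l\ge 0$ the mechanism returns $(0,d)$, which is optimal (every agent attains their minimal possible cost); if $x_r\le 0$ then $x_l\le 0$ and the mechanism returns $(x_l,x_l+d)$, and since the maximum cost in this regime is always achieved by the agent at $x_l$, this is again optimal. So the interesting situation is $x_l<0<x_r$, where the mechanism returns $(\min(x_l,0),\min(x_l,0)+d)=(x_l,x_l+d)$ when $-d<x_l<0$, and $(-d,0)$ when $x_l\le -d$.

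For $x_l<0<x_r$ with $-d<x_l<0$, the mechanism returns exactly the interval $(x_l,x_l+d)$ that Mechanism~\ref{mec:max1} returns, so the analysis of Theorem~\ref{theorem:2} applies verbatim and gives a $2$-approximation; I would simply cite that. The remaining case is $x_l\le -d$, where Mechanism~\ref{mec:soc-strong} returns $(-d,0)$ while Mechanism~\ref{mec:max1} would return $(x_l,x_l+d)$, so a separate argument is needed. Under $(-d,0)$, the cost of the agent at $x_l$ is $-x_l-d\ge 0$ and the cost of the agent at $x_r$ is $\max(0,x_r-d)$, and the maximum cost is the larger of these (no interior agent can exceed both extremes). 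I would then compare against $\text{OPT}$ using Proposition~\ref{prop:optmax}, splitting on the sign of $x_l+x_r$: in the subcase $x_l+x_r<-d$ the optimum is $-x_l-d$, and since $x_l\le -d$ forces $x_r<-x_l-d$, hence $x_r-d<-x_l-2d\le -x_l-d$, the maximum cost is $-x_l-d=\text{OPT}$, so the mechanism is actually optimal here. In the subcase $-d\le x_l+x_r$ (which, combined with $x_l\le -d$, still allows $x_r>0$), the optimum is either $0$ or $\tfrac{x_r-x_l-d}{2}$, while the induced maximum cost is at most $\max(-x_l-d,\,x_r-d)\le x_r-x_l-d$ (using $x_r\ge 0$ for the first term and $x_l\le -d<0$... more carefully, $-x_l-d\le x_r-x_l-d$ since $x_r\ge0$), which is $2\cdot\tfrac{x_r-x_l-d}{2}=2\cdot\text{OPT}$ when $x_r-x_l>d$, and is $0$ when $x_r-x_l\le d$ since then all agents lie in $[x_l,x_l+d]\supseteq$... wait, this needs $(-d,0)$ to cover everyone, which holds iff $x_l\ge -d$, contradicting $x_l\le -d$ unless $x_l=-d$; so when $x_l=-d$ and $x_r-x_l\le d$ the interval $(-d,0)$ has all agents within $d$ of... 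I would instead just note $x_r\le d$ here gives cost $\le x_r-d\le 0$ trivially, so $\text{MC}=0=\text{OPT}$.

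The main obstacle I anticipate is the bookkeeping in the case $x_l\le -d$, $x_l+x_r\ge -d$: one must be careful that the interval $(-d,0)$ returned by the mechanism does not contain the facility-adjacent structure that made Mechanism~\ref{mec:max1}'s interval $(x_l,x_l+d)$ tight, and in particular that the bound $\text{MC}(\mathbf x,-d,0)\le x_r-x_l-d$ genuinely holds with the right constant. The key inequality to verify cleanly is $\max(-x_l-d,\ \max(0,x_r-d))\le 2\max(0,\tfrac{x_r-x_l-d}{2})$ whenever $x_l\le -d<0<x_r$ and $x_l+x_r\ge -d$; both terms on the left are easily seen to be at most $x_r-x_l-d$ (the first because $x_r\ge 0$, the second because $-x_l\ge d>0$, so $x_r-d\le x_r-x_l-d$), which closes the argument. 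Finally I would remark that tightness is inherited: the profile $\mathbf x=(-d,d)$ used after Theorem~\ref{theorem:2} has $x_l=-d$, so Mechanism~\ref{mec:soc-strong} also returns $(-d,0)$ with maximum cost $d=2\cdot\tfrac d2$, showing the ratio $2$ cannot be improved.
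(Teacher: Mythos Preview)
Your overall strategy---reduce to Theorem~\ref{theorem:2} whenever Mechanisms~\ref{mec:soc-strong} and~\ref{mec:max1} agree, and analyze the case $x_l\le -d$ separately---is exactly the case split the paper uses. However, your handling of the $x_l\le -d$ branch has two concrete slips.

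First, the formula $c(x_r,-d,0)=\max(0,x_r-d)$ is wrong: the interval $(-d,0)$ has the facility at its right endpoint, so an agent at $x_r>0$ simply pays $x_r$ (shrinking $(-d,0)$ does nothing for points on the positive half-line; compare the proof of Lemma~\ref{lemma:basic1}). Fortunately this does not break your bound, since $x_r\le x_r-x_l-d$ still follows from $x_l\le -d$. Second, you omit the subcase $x_l+x_r>d$, which can occur even with $x_l\le -d$ (take $x_l=-d$, $x_r=3d$); by Proposition~\ref{prop:optmax} the optimum is then $x_r-d$, not $\tfrac{x_r-x_l-d}{2}$. One must verify $\max(-x_l-d,\,x_r)\le 2(x_r-d)$ here. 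It does hold---$x_l+x_r>d$ together with $x_l\le -d$ forces $x_r>2d$, whence $x_r\le 2(x_r-d)$ and $-x_l-d<x_r-d\le 2(x_r-d)$---but it is absent from your writeup.

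The paper sidesteps all of this bookkeeping. In the sole case where the mechanisms differ, Mechanism~\ref{mec:max1} outputs $(x_l,x_l+d)$ with $x_l+d\le 0$, and Lemma~\ref{lemma:basic1} immediately gives $c(x_i,-d,0)\le c(x_i,x_l,x_l+d)$ for every agent $i$; hence the maximum cost under Mechanism~\ref{mec:soc-strong} is at most that under Mechanism~\ref{mec:max1}, which is already at most $2\cdot\text{OPT}$ by Theorem~\ref{theorem:2}. This pointwise-domination argument is a two-line proof; your direct case analysis works once the gaps above are patched, but it is noticeably more laborious.
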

\begin{proof}
Since Mechanism \ref{mec:max1}  is 2-approximation and {the only difference between Mechanism \ref{mec:max1}  and  Mechanism  \ref{mec:soc-strong} lies in the case when $x_l\le -d$,  } we only need to discuss this case. 
When $x_l<-d$, Mechanism \ref{mec:max1} returns $(x_l,x_l+d)$, and  Mechanism  \ref{mec:soc-strong} returns $(-d,0)$. By Lemma \ref{lemma:basic1}, the cost of any agent under solution $(-d,0)$ is at most that under $(x_l,x_l+d)$. 
 Therefore,  Mechanism \ref{mec:soc-strong} is also $2$-approximation for the maximum cost.
\end{proof}

\section{Inapproximability Results for Randomized Mechanisms}\label{sec:ext}

In this section, we consider the extent to which randomization can be used to design strategyproof mechanisms with better approximation ratios. To this end, we consider randomized strategyproof mechanisms. A randomized mechanism $f: \mathbb{R}^n \rightarrow \Delta \left(\mathbb{R}^2 \right)$ maps a location profile $\mathbf{x}$ to a probability distribution, where $\Delta \left(\mathbb{R}^2 \right)$ is the set of all possible probability distributions of outcomes in $\mathbb{R}^2$.
Given a distribution $f(\mathbf x)\in\Delta(\mathbb R^2)$, the cost of each agent $i$ is the expectation $\mathbb E_{(a,b)\sim f(\mathbf x)}[c(x_i,a,b)]$. 

Below, we show that any randomized strategyproof mechanisms cannot be much better than our deterministic mechanisms in terms of approximation ratios.

\begin{theorem}\label{thm:999}
    No randomized SGSP mechanism has an approximation ratio less than $\frac{n+2}{4}$ for the social cost. 
\end{theorem}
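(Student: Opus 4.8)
The plan is to mimic the structure of the deterministic lower bound proof (Theorem for the social cost under SGSP), using the same three location profiles $\mathbf{x}_1, \mathbf{x}_2, \mathbf{x}_3$ with an even number $n$ of agents: $\frac{n}{2}$ agents at $-d$ and $\frac{n}{2}$ at $d$ in $\mathbf{x}_1$; one agent at $-d$, $\frac{n}{2}-1$ agents at $-\epsilon$, and $\frac{n}{2}$ at $d$ in $\mathbf{x}_2$; and the mirror image for $\mathbf{x}_3$. The key difference is that $f$ now outputs a distribution, so instead of tracking a single pair $(a,b)$ we track the expected ``coverage lengths'' $\mathbb{E}[|[a,b]\cap[0,\infty)|]$ and $\mathbb{E}[|[a,b]\cap(-\infty,0]|]$, i.e. how much of the positive and negative axes the random range covers in expectation. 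These expectations fully determine the expected cost of the extreme agents at $\pm d$ (since for an agent at $d$, $c(x_i,a,b) = d - |[0,d]\cap[a,b]|$ when the range sits favorably, and one must be mildly careful that $|[a,b]\cap[0,d]| \le |[a,b]\cap[0,\infty)|$, with equality controlled by the approximation ratio forcing the range not to extend far right of $d$, etc.).

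First I would argue, exactly as in the deterministic case, that on $\mathbf{x}_1$ the approximation ratio forces the expected range to be essentially balanced and to cover, in expectation, almost all of $[-d,0]$ and $[0,d]$ — more precisely, letting $p = \mathbb{E}[|[a,b]\cap[0,d]|]$ and $q = \mathbb{E}[|[a,b]\cap[-d,0]|]$ under $f(\mathbf{x}_1)$, the social cost on $\mathbf{x}_1$ is $\frac{n}{2}(d-p) + \frac{n}{2}(d-q)$, while $\mathrm{OPT}(\mathbf{x}_1)=0$; so a finite approximation ratio forces $p = q = d$, meaning the random range almost surely contains $[-d,d]$... but wait, that is impossible since $b-a\le d$. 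This is the crux: on $\mathbf{x}_1$ the optimal social cost is $0$, so \emph{any} finite-ratio mechanism must achieve social cost $0$ on $\mathbf{x}_1$, forcing $f(\mathbf{x}_1)$ to be supported on ranges $(a,b)$ with $a\le -d+0$... no — actually it forces every agent's cost to be $0$, hence $a \le -d$ and $b\ge d$ almost surely, contradicting $b-a\le d$ unless... Here I must instead use that $\mathrm{OPT}(\mathbf{x}_1)$ need not be $0$: with $\frac n2$ agents at $-d$ and $\frac n2$ at $d$, the best deterministic range gives social cost $\frac{n}{2}d$ (e.g. $(0,d)$ or symmetric). So the ratio bound gives $\frac n2(d-p) + \frac n2(d-q) \le (\text{ratio}) \cdot \frac n2 d$, which for ratio just under $\frac{n+2}{4}$ is a genuine constraint but not a forcing-to-zero one. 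So the argument must instead run through the \emph{relative} comparison between $\mathbf{x}_1$, $\mathbf{x}_2$, $\mathbf{x}_3$ via SGSP.

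The main steps: (i) show that SGSP, applied to the $\frac n2$ agents at $d$ moving from $\mathbf{x}_1$ to $\mathbf{x}_3$ (and back), forces the expected positive-coverage $\mathbb{E}[|[0,b]|]$ under $f(\mathbf{x}_1)$ and under $f(\mathbf{x}_3)$ to coincide — and symmetrically, with $\mathbf{x}_2$, the negative-coverage quantities coincide; the subtlety is that with randomization one compares expected costs, and an agent at $d$ has expected cost $d - \mathbb{E}[|[0,d]\cap[a,b]|]$, so one needs the extra observation (via the approximation ratio on the relevant profiles) that the range almost never extends materially past $d$, so $\mathbb{E}[|[0,d]\cap[a,b]|] = \mathbb{E}[|[0,b]|]$ up to $o(1)$; (ii) symmetrically, the $\frac n2$ agents at $-d$ pin down the negative-coverage; (iii) conclude that $f(\mathbf{x}_2), f(\mathbf{x}_3)$ have the same expected positive- and negative-coverage lengths as $f(\mathbf{x}_1)$, call them $\bar p$ and $\bar q$ with $\bar p + \bar q \le d$ (from $b - a \le d$ and Jensen); (iv) compute that $\mathrm{SC}(\mathbf{x}_2, f(\mathbf{x}_2)) \ge (d - \bar p) + \frac n2 (d - \bar q) + (\frac n2 - 1)\epsilon$-type lower bounds and similarly for $\mathbf{x}_3$, whose optima are both $(\frac n2 - 1)\epsilon + d$; (v) take the max over the two profiles and optimize over $\bar p + \bar q \le d$ exactly as the $\max(\frac n2 a - b, a - \frac n2 b) \ge -(\frac n2 + 1)d/2$ step in the deterministic proof, letting $\epsilon\to 0$ to get the bound $\frac{n+2}{4}$. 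I would also need to handle the analogues of Cases 2 and 3 of the deterministic proof (where the expected range lies entirely on one side of $0$): there one uses SGSP with \emph{all} $n$ agents collapsing $\mathbf{x}_1$ to a single-point profile (all at $d$, resp. all at $-d$), whose optimum is $0$, forcing $f$ to output the degenerate range almost surely and contradicting that the agents at $d$ (resp. $-d$) strictly improve.

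\medskip

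The hard part will be step (i)/(ii): making the ``coverage length'' bookkeeping rigorous under expectation, in particular ruling out the possibility that the mechanism hedges by placing mass on ranges that stick out past $d$ (or past $-d$) on the favorable side — this would decouple $\mathbb{E}[|[0,d]\cap[a,b]|]$ from $\mathbb{E}[|[0,b]|]$ and break the clean SGSP comparison. I expect this is handled by first invoking the approximation-ratio hypothesis on auxiliary single-side profiles (all agents at $d$, or $\mathbf{x}_1$ itself) to show that such ``wasteful'' placements occur with negligible probability, so that all the coverage identities hold up to an additive error that vanishes as $\epsilon \to 0$; once that is in place, the remaining algebra is identical to the deterministic proof and yields the same $\frac{n+2}{4}$ limit.
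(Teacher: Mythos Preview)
Your proposal follows essentially the same route as the paper's proof --- three profiles, SGSP pins down the expected ``coverage lengths,'' then average the social costs on $\mathbf{x}_2$ and $\mathbf{x}_3$ against the constraint $\bar p+\bar q\le d$ --- but you are making it harder than necessary in two places.

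First, the identity $c(d,a,b)=d-|[0,d]\cap[a,b]|$ (and symmetrically $c(-d,a,b)=d-|[-d,0]\cap[a,b]|$) holds \emph{exactly}, in every case, regardless of whether the range sticks out past $d$ or sits entirely on one side of $0$; you can check this directly from the cost definition. So there is no need to work with $\mathbb{E}[|[0,b]|]$ or to invoke auxiliary profiles to rule out ``wasteful'' placements: track $\bar p=\mathbb{E}[|[0,d]\cap[a,b]|]$ and $\bar q=\mathbb{E}[|[-d,0]\cap[a,b]|]$ directly, and note $\bar p+\bar q=\mathbb{E}[|[-d,d]\cap[a,b]|]\le \mathbb{E}[b-a]\le d$. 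This also dissolves your ``Cases 2 and 3'' worry, since the one-sided situations are already absorbed into these expectations.

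Second, the paper places the $\tfrac n2-1$ dummy agents at $0$ rather than at $-\epsilon$. Since an agent at $0$ has cost identically zero under every range, the SGSP comparison from $\mathbf{x}_2$ back to $\mathbf{x}_1$ becomes clean: in the coalition only the agent at $-d$ can possibly change cost, so SGSP gives equality of $\mathbb{E}[c(-d,\cdot)]$ (hence of $\bar q$) across the two profiles with no $o(1)$ slop. With these two simplifications the computation collapses to $\text{SC}(\mathbf{x}_2)+\text{SC}(\mathbf{x}_3)=(n+2)d-\tfrac{n+2}{2}(\bar p+\bar q)\ge \tfrac{n+2}{2}d$, and since $\text{OPT}=d$ on both profiles the bound $\tfrac{n+2}{4}$ follows immediately.
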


\begin{proof}
Let $f$ be a randomized SGSP mechanism. We consider three location profiles $\mathbf{x}_1, \mathbf{x}_2, \mathbf{x}_3$ with 
even number $n$ of agents. In $\mathbf{x}_1$, $\frac{n}{2}$ agents are located at $-d$ and the other $\frac{n}{2}$ are located at $d$. In $\mathbf{x}_2$, one agent is located at $-d$, $\frac{n}{2}-1$ agents are located at 0, and other $\frac{n}{2}$ are located at $d$. In $\mathbf{x}_3$, one agent is located at $d$, $\frac{n}{2}-1$ agents are located at 0 and other $\frac{n}{2}$ are located at $-d$. 

Let the distributions returned by mechanism $f$ be $f(\mathbf{x}_1)=(a,b), f(\mathbf{x}_2) = (a',b'), f(\mathbf{x}_3) = (a'',b'')$. Suppose $P_1 = \textbf{Pr}(a>0), P_2 = \textbf{Pr}(b<0)$, and $P_0=1-P_1-P_2 = \textbf{Pr}(a\le 0\wedge b\ge 0)$. The corresponding events are denoted by $A_1, A_2, A_0$.

Due to SGSP property, these agents cannot misreport from $\mathbf{x}_1$ to $\mathbf{x}_2$ to decrease the cost of agent at $-d$, that is, we have $\mathbb{E}[c(-d,a,b)]\le \mathbb{E}[c(-d,a',b')]$. This immediately implies that $\mathbb{E}\left[|[-d, 0] \cap [a, b]|\right] \ge \mathbb{E}\left[|[-d, 0] \cap [a', b']|\right]$. 

Also, these agents  also cannot misreport from $\mathbf{x}_2$ to $\mathbf{x}_1$. Then $\mathbb{E}\left[|[-d, 0] \cap [a, b]|\right] \le \mathbb{E}\left[|[-d, 0] \cap [a', b']|\right]$ holds by contradiction proof. Suppose $\mathbb{E}\left[|[-d, 0] \cap [a, b]|\right] > \mathbb{E}\left[|[-d, 0] \cap [a', b']|\right]$, then according to SGSP, we must also have $\mathbb{E}\left[|[0, 0] \cap [a', b']|\right] > \mathbb{E}\left[|[0, 0] \cap [a, b]|\right]= 0$. This leads to a contradiction.

Therefore we know $\mathbb{E}\left[|[-d, 0] \cap [a, b]|\right] = \mathbb{E}\left[|[-d, 0] \cap [a', b']|\right]$ according to previous inequalities. It is also easy to know that $\mathbb{E}\left[|[0, d] \cap [a, b]|\right] = \mathbb{E}\left[|[0, d] \cap [a', b']|\right]$, as otherwise all of the agents may collude together.

For $\mathbf{x}_1$ and $\mathbf{x}_3$, $\mathbb{E}\left[|[0, d] \cap [a, b]|\right] = \mathbb{E}\left[|[0, d] \cap [a'', b'']|\right]$ and $\mathbb{E}\left[|[-d, 0] \cap [a, b]|\right] = \mathbb{E}\left[|[-d, 0] \cap [a'', b'']|\right]$ hold by symmetry. In this way, we know the following two equations:
\begin{align*}
     \mathbb{E}[|[-d, 0]\cap [a, b]|] &= \mathbb{E}[|[-d, 0]\cap [a', b']|] \\
     &= \mathbb{E}[|[-d, 0]\cap [a'', b'']|]; \\
     \mathbb{E}[|[0, d]\cap [a, b]|] &= \mathbb{E}[|[0, d]\cap [a', b']|] \\
     &= \mathbb{E}[|[0, d]\cap [a'', b'']|]. \\
\end{align*}

We then calculate the social cost of profile $\mathbf{x}_3$ and profile $\mathbf{x}_2$. We have
\begin{align*}
    & \text{SC}(\mathbf{x}_3, f(\mathbf{x}_3)) \\
    = ~&\frac{n+2}{2}\cdot d - \frac{n}{2}\cdot \mathbb{E}[|[a'', b'']\cap [-d, 0]|] - \mathbb{E}[|[a'', b'']\cap [0, d]|] \\
    = ~&\frac{n+2}{2}\cdot d - \frac{n}{2}\cdot \mathbb{E}[|[a, b]\cap [-d, 0]|] - \mathbb{E}[|[a, b]\cap [0, d]|] \\
    \ge ~& P_0 \left(\frac{n}{2} \mathbb{E}[d - [-d, 0]\cap [a, 0] | A_0] + \mathbb{E}[d - [0, d]\cap [0, b] | A_0]\right) \\
    ~&  + P_1\cdot \frac{n}{2}\cdot \mathbb{E}[d - 0 | A_1] + P_2\cdot \mathbb{E}[d - 0 | A_2] \\
    = ~&P_0\cdot\left(\frac{n}{2}\cdot \left(d + \mathbb{E}[a | A_0]\right) + d - \mathbb{E}[b | A_0] \right) + \left(P_1\cdot \frac{n}{2} + P_2\right)d
    \end{align*}
and 
 \begin{align*}
 & \text{SC}(\mathbf{x}_2, f(\mathbf{x}_2))\\
    = ~&\frac{n+2}{2}\cdot d - \frac{n}{2}\cdot \mathbb{E}[|[a', b']\cap [0, d]|] - \mathbb{E}[|[a', b']\cap [-d, 0]|] \\
    = ~&\frac{n+2}{2}\cdot d - \frac{n}{2}\cdot \mathbb{E}[|[a, b]\cap [0, d]|] - \mathbb{E}[|[a, b]\cap [-d, 0]|] \\
    \ge ~& P_0 \left(\frac{n}{2} \mathbb{E}[d - [0, d]\cap [0, b] | A_0] + \mathbb{E}[d - [-d, 0]\cap [a, 0] | A_0]\right) \\
     ~& + P_2\cdot \frac{n}{2}\cdot \mathbb{E}[d - 0 | A_2] + P_1\cdot \mathbb{E}[d - 0 | A_1] \\
    = ~& P_0\cdot\left(\frac{n}{2}\cdot \left(d - \mathbb{E}[b | A_0]\right) + d + \mathbb{E}[a | A_0] \right) + \left(P_2\cdot \frac{n}{2} + P_1\right)d. \\
\end{align*}

The approximation ratio of $f$ is at least the maximum of the two ratios for profile $\mathbf{x}_3$ and profile $\mathbf{x}_2$. The optimal social cost for both instances is $d$. Then we have
\begin{align*}
    & \max\left(\text{SC}(\mathbf{x}_3, f(\mathbf{x}_3)), \text{SC}(\mathbf{x}_2, f(\mathbf{x}_2))\right) \\
    \ge ~& \frac{1}{2}\cdot \left(\text{SC}(\mathbf{x}_3 , f(\mathbf{x}_3)) + \text{SC}(\mathbf{x}_2 , f(\mathbf{x}_2))\right) \\
    \ge ~& P_0\cdot \left(\frac{n+2}{2}\cdot d - \frac{n+2}{4}\cdot (\mathbb{E}[b|A_0] - \mathbb{E}[a|A_0]) \right) +\frac{n+2}{4}\cdot (P_1 + P_2) \cdot d\\
    \ge ~& \frac{n+2}{4}\cdot (P_0 + P_1 + P_2)\cdot d\\
    = ~& \frac{n+2}{4} \cdot \text{OPT},
\end{align*}
where the last inequality comes from the fact that $\mathbb{E}[b|A_0] - \mathbb{E}[a|A_0]\le d$ since in every realization of the distribution we have $b-a\le d$. Therefore, the approximation ratio of $f$ is at least $\frac{n+2}{4}$. 

\end{proof}

\begin{theorem}\label{theorem:4}
    For the maximum cost, no randomized strategyproof mechanism has an approximation ratio of less than 1.5.
\end{theorem}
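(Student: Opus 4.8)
The plan is to mimic the structure of the proof of Theorem~\ref{theorem:3} but with a carefully chosen profile and a single misreport, and then exploit the expectation (linearity) structure that randomization forces. First I would take the symmetric two-agent profile $\mathbf x=(-\tfrac d2,\tfrac d2)$, for which the optimal maximum cost is $0$ (the feasible range $(-\tfrac d2,\tfrac d2)$ has length exactly $d$ and covers both agents and the facility). Since this optimum is $0$, any finite approximation ratio forces $f(\mathbf x)$ to put probability $1$ on ranges $(a,b)$ with $a\le -\tfrac d2$ and $b\ge \tfrac d2$; in particular $\mathbb E[c(-\tfrac d2,f(\mathbf x))]=\mathbb E[c(\tfrac d2,f(\mathbf x))]=0$, so the expected overlap satisfies $\mathbb E\big[\,|[a,b]\cap[-\tfrac d2,\tfrac d2]|\,\big]=d$, and since $b-a\le d$ always, in fact $a=-\tfrac d2,b=\tfrac d2$ almost surely. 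So on this profile the mechanism is essentially deterministic.

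Next I would have agent $2$ (at $\tfrac d2$) misreport to $x_2'=Kd$ for a suitable constant $K>1$, giving profile $\mathbf x'=(-\tfrac d2,Kd)$. Strategyproofness gives $\mathbb E[c(\tfrac d2, f(\mathbf x'))]\ge \mathbb E[c(\tfrac d2,f(\mathbf x))]=0$, which is vacuous, so that bound is too weak by itself — this is where a slight perturbation is needed, exactly as in Theorem~\ref{theorem:3}. The fix: replace $\mathbf x$ by $\mathbf x_\epsilon=(-\tfrac d2-\epsilon,\tfrac d2+\epsilon)$, whose optimum is $\epsilon$ (achieved by the centered range); the approximation ratio forces $\mathbb E[c(\tfrac d2+\epsilon,f(\mathbf x_\epsilon))]$ and $\mathbb E[c(-\tfrac d2-\epsilon,f(\mathbf x_\epsilon))]$ to each be at most $r\epsilon$ where $r$ is the claimed ratio, hence one of the two agents, say agent $2$, has expected cost $\ge\epsilon/2$ under $f(\mathbf x_\epsilon)$. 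Then agent $2$ misreports to $Kd$; by SP, $\mathbb E[c(\tfrac d2+\epsilon,f(\mathbf x'))]\ge \epsilon/2$, i.e. the expected overlap $\mathbb E[\,|[a,b]\cap[0,\tfrac d2+\epsilon]|\,]\le \tfrac d2+\epsilon-\tfrac\epsilon2=\tfrac d2+\tfrac\epsilon2$. Combined with $b-a\le d$ this limits how much mass the range can simultaneously give to covering $[0,\tfrac d2+\epsilon]$ and to covering a neighborhood of $Kd$ and of $-\tfrac d2-\epsilon$.

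The core of the argument is then a two-point (or two-region) averaging bound on $\mathbf x'$: the maximum cost of $\mathbf x'$ under any realized range $(a,b)$ is at least $\max\big(c(-\tfrac d2-\epsilon,a,b),\,c(Kd,a,b)\big)\ge \tfrac12\big(c(-\tfrac d2-\epsilon,a,b)+c(Kd,a,b)\big)$, so $\mathbb E[\text{MC}(\mathbf x',f(\mathbf x'))]\ge \tfrac12\big(\mathbb E[c(-\tfrac d2-\epsilon,f(\mathbf x'))]+\mathbb E[c(Kd,f(\mathbf x'))]\big)$, and I would lower-bound $\mathbb E[c(Kd,f(\mathbf x'))]$ using the overlap constraint just derived (the range can shorten agent $Kd$'s cost by at most the length of $[a,b]$ that lies in $[0,Kd]$, which is squeezed by the $[0,\tfrac d2+\epsilon]$-overlap cap and $b-a\le d$), while $\mathbb E[c(-\tfrac d2-\epsilon,f(\mathbf x'))]$ is at least the mass placed to the right of $-\tfrac d2-\epsilon$. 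Dividing by the true optimum of $\mathbf x'$, which is $\tfrac{Kd+\tfrac d2+\epsilon-d}{2}=\tfrac{(K-\tfrac12)d+\epsilon}{2}$ by Proposition~\ref{prop:optmax} (here $x_l+x_r=(K-\tfrac12)d>d$ once $K>\tfrac32$, so actually OPT $=x_r-d=(K-1)d+\epsilon$; I will pick $K$ in the regime that makes the bookkeeping cleanest, likely $x_l+x_r\in[-d,d]$ is impossible here so OPT$=x_r-d$), I would optimize over $K$ and let $\epsilon\to0$ to extract the ratio $3/2$. The main obstacle I anticipate is tuning $K$ and tracking the $\max(0,\cdot)$ clippings in the cost function so that the averaged lower bound on $\mathbb E[\text{MC}(\mathbf x',f(\mathbf x'))]$ is provably $\ge \tfrac32\cdot\text{OPT}(\mathbf x')-o(1)$ rather than something smaller; in particular I must be careful that the overlap cap $\tfrac d2+\tfrac\epsilon2$ genuinely prevents the randomized range from helping agent $Kd$ by more than $\tfrac d2$ in expectation, which is what converts the deterministic "$2$" into the randomized "$1.5$".
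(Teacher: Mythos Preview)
Your high-level skeleton (start from the near-symmetric profile $(-\tfrac d2-\epsilon,\tfrac d2+\epsilon)$, use that one agent has nontrivial expected cost, have that agent misreport, and turn strategyproofness into an overlap cap on $[0,\tfrac d2+\epsilon]$) matches the paper. The concrete execution, however, has two genuine gaps that prevent the argument from reaching $3/2$.

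\medskip
\textbf{The large misreport $x_2'=Kd$ kills the constraint.} For $K>\tfrac32$ (the regime you explicitly land in, since you note $x_l+x_r>d$ and hence $\mathrm{OPT}=(K-1)d$), the overlap cap on $[0,\tfrac d2+\epsilon]$ gives no control over the overlap with $[0,Kd]$. A mechanism can place the range entirely inside $[\tfrac d2+\epsilon,Kd]$ (e.g.\ $(a,b)$ with $a>\tfrac d2+\epsilon$), so $|[a,b]\cap[0,\tfrac d2+\epsilon]|=0$ while $|[a,b]\cap[0,Kd]|=d$; then $c(Kd,a,b)=(K-1)d$ and $c(x_l,a,b)=\tfrac d2+\epsilon<(K-1)d$, giving $\mathrm{MC}=\mathrm{OPT}$ exactly. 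So for large $K$ no lower bound above $1$ follows. The paper avoids this by misreporting only to $\tfrac d2+3\epsilon$, so that the new instance stays in the balanced regime with $\mathrm{OPT}=2\epsilon$; here the SP overlap cap on $[0,\tfrac d2+\epsilon]$ is essentially a cap on the entire positive side relevant to $x_r$.

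\medskip
\textbf{The averaging step $\max\ge\tfrac12(\text{sum})$ is too weak.} For any realized range one has $c(x_l,a,b)+c(x_r,a,b)\ge x_r-x_l-d$, and in the balanced regime $\mathrm{OPT}=\tfrac12(x_r-x_l-d)$. Hence $\tfrac12\big(\mathbb E[c(x_l)]+\mathbb E[c(x_r)]\big)\ge \mathrm{OPT}$ and nothing more; averaging can never beat ratio $1$ here. The paper does \emph{not} average: it keeps the $\max$, conditions on the event $A=\{a'\le 0\le b'\}$ (showing its probability must tend to $1$ since otherwise one extreme agent already has cost $\ge \tfrac d2+\epsilon\gg\mathrm{OPT}$), and inside $A$ uses $b'-a'\le d$ to convert the SP bound $\mathbb E[\min(b',\tfrac d2+\epsilon)\mid A]\le \tfrac d2/P_4$ into a lower bound on $\mathbb E[\max(a'+\tfrac d2+\epsilon,\ \tfrac d2+3\epsilon-b')\mid A]$ of roughly $3\epsilon$. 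That last step---splitting on $b'\le \tfrac d2+\epsilon$ versus $b'>\tfrac d2+\epsilon$ and feeding the SP inequality back in---is the part your sketch is missing and is where the factor $3/2$ actually appears.
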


\begin{proof}
    Let $f$ be any randomized strategyproof mechanism with an approximation ratio of $1.5-\delta$ for some $\delta>0$.  Consider a location profile $\mathbf x=(-\frac d2-\epsilon,\frac d2+\epsilon)$, where $\epsilon>0$ is a sufficiently small value. Under any solution, there is at least one agent whose expectation of cost is at least $\epsilon$. Assume w.l.o.g. that under the solution $f(\mathbf x)$, the expectation of cost of agent 2 is at least $\epsilon$.

  Next, suppose that agent 2 misreports the location as $x_2'=\frac{d}{2}+3\epsilon$, and the location profile becomes $\mathbf x'=(-\frac d2-\epsilon,\frac{d}{2}+3\epsilon)$. Let $f(\mathbf x), f(\mathbf x')$ be the outcome distributions of the mechanism, and we use $(a, b) \sim f(\mathbf x), (a', b') \sim f(\mathbf x')$. 

  By the strategyproofness of mechanism $f$, the cost of agent 2 cannot decrease, and we have $\mathbb E\left[c(\frac d2+\epsilon, (a',b'))\right]\ge \epsilon$. It immediately implies that 
  \begin{align*}
      \mathbb E\left[\left|[a',b']\cap \left[0,\frac d2+\epsilon\right]\right|\right] &= \mathbb{E}\left[\frac d2+\epsilon - c\left(\frac d2+\epsilon,(a',b')\right)\right] \le \frac d2
  \end{align*}

  Let $P_0=\textbf{Pr}(a'>0)$,  $P_3=\textbf{Pr}(b'<0)$ and $P_4=\textbf{Pr}(a'\le 0,b'\ge 0)$. It is clear that $P_0+P_3+P_4=1$. Define $A$ to be the event $a'\le 0\wedge b'\ge 0$. Then we have 
 \begin{align*}
     P_4 \cdot \mathbb E\left[\left.\min\left(b', \frac{d}{2}+\epsilon\right)\right|A\right]&=P_4\cdot \mathbb E\left[\left.\left|[0, b']\cap \left[0, \frac{d}{2}+\epsilon\right]\right|\right|A\right]\\
     &\le \mathbb E\left[\left|[a',b']\cap \left[0,\frac d2+\epsilon\right]\right|\right]\le \frac{d}{2}
 \end{align*}

  Then we calculate the expectation of the maximum cost with $x_l=-\frac{d}{2}-\epsilon, x_r=\frac{d}{2}+3\epsilon$. We use $ \text{MC}$ instead of $ \text{MC}(\mathbf x',f(\mathbf x'))$ for simplicity.  Since the optimal maximum cost is $2\epsilon$, we have

  \begin{align}
      \text{MC} &=P_0\cdot \mathbb E[\text{MC}|a'>0]+P_3\cdot \mathbb E[\text{MC}|a'\le 0,b'<0]\nonumber+P_4\cdot \mathbb E[\text{MC}|A]\nonumber\\
      &\ge P_0\left(\frac d2+\epsilon\right)+P_3\left(\frac d2+3\epsilon\right)\nonumber+P_4\cdot\mathbb E\left[\left.\max\left(a'+\frac{d}{2}+\epsilon, \frac{d}{2}+3\epsilon-b'\right)\right|A\right]\nonumber \\
      &\ge P_0\left(\frac d2+\epsilon\right)+P_3\left(\frac d2+3\epsilon\right)\nonumber+P_4\cdot\mathbb E\left[\left.\max\left(b'-\frac{d}{2}+\epsilon, \frac{d}{2}+3\epsilon-b'\right)\right|A\right]\nonumber \\
      &\ge (1-P_4)\left(\frac d2+\epsilon\right)+P_4\cdot\mathbb E\left[\left.\max\left(b'-\frac{d}{2}+\epsilon, \frac{d}{2}+3\epsilon-b'\right)\right|A\right]\label{eq:555}
  \end{align}

  Since  $\epsilon$ is a sufficiently small number and $\frac d2+\epsilon \gg 1.5\text{OPT}=3\epsilon$,  it must be $P_4\rightarrow 1$. Precisely, it follows from $(1-P_4)\left(\frac d2+\epsilon\right)<1.5OPT$ that $P_4>1-\frac{3\epsilon}{d/2+\epsilon}$. 

  Define $P_1=\textbf{Pr}(b' \le \frac{d}{2}+\epsilon|A)$, and $P_2 = 1 - P_1=\textbf{Pr}(b' > \frac{d}{2} + \epsilon|A)$. Define $\mathbf{b_{1}'}=\mathbb E\left[\min\left(b', \frac{d}{2}+\epsilon\right)|b' \le \frac{d}{2}+\epsilon,A\right]$ to denote the weighted average  $b'$ with $b'\le \frac{d}{2}+\epsilon$. Therefore we have
  \begin{align}
        P_4  P_1\cdot \mathbf{b_{1}'} + P_4P_2\cdot \left(\frac{d}{2}+\epsilon\right) &= P_4\cdot\mathbb E\left[\left.\min\left(b', \frac{d}{2}+\epsilon\right)\right|A\right]\le \frac{d}{2}\label{eq:666}
    \end{align}

  By \eqref{eq:555},  we have
\begin{align*}
    &\frac{\text{MC}}{P_4}-\left(\frac{1}{P_4}-1\right)\left(\frac d2+\epsilon\right) \\
    &\ge \mathbb E\left[\left.\max\left(b'-\frac{d}{2}+\epsilon, \frac{d}{2}+3\epsilon-b'\right)\right|A\right]\\
    &= P_1\cdot\mathbb E\left[\left.\frac{d}{2}+3\epsilon-b'\right|b'\le \frac{d}{2}+\epsilon,A\right] + P_2\cdot\mathbb E\left[\left.b'-\frac{d}{2}+\epsilon\right|b'> \frac{d}{2}+\epsilon,A\right]\\
      &= P_1\cdot\left(\frac{d}{2}+3\epsilon-\mathbb E\left[b'|b'\le \frac{d}{2}+\epsilon,A\right]\right) + P_2\cdot \left(\mathbb E\left[b'|b'> \frac{d}{2}+\epsilon,A\right]-\frac{d}{2}+\epsilon\right)\\
      &\ge P_1\cdot\left(\frac{d}{2}+3\epsilon-\mathbf{b_{1}'}\right) + P_2\cdot 2\epsilon\\
      &= P_1\cdot\left(\frac{d}{2}+3\epsilon\right) + P_2\cdot 2\epsilon -P_1\cdot \mathbf{b_{1}'}\\
      &\ge P_1\cdot\left(\frac{d}{2}+3\epsilon\right) + P_2\cdot 2\epsilon +P_2\cdot \left(\frac{d}{2}+\epsilon\right) -\frac{d}{2P_4} \text{\quad (By (\ref{eq:666}))}\\
      &= P_1\cdot\left(\frac{d}{2}+3\epsilon\right) +P_2\cdot \left(\frac{d}{2}+3\epsilon\right) -\frac{d}{2P_4}\\
      &=3\epsilon + \frac d2\left(1-\frac{1}{P_4}\right)
\end{align*}

Since $P_4>1-\frac{3\epsilon}{d/2+\epsilon}$ and $\epsilon$ is sufficiently small compared with $d$ and $\delta$, it follows that 
\begin{align*}
    \text{MC}&\ge P_4\cdot 3\epsilon+(1-P_4)\epsilon >\left(3-\frac{6\epsilon}{d/2+\epsilon}\right)\epsilon >(1.5-\delta)\cdot 2\epsilon =(1.5-\delta)\cdot \text{OPT},
\end{align*}
 which is a contradiction to the approximation ratio $1.5-\delta$ of mechanism $f$.
\end{proof}

\section{Conclusion}

We investigate a variation of facility location problems (FLPs) on real lines to improve the accessibility of agents to a prelocated facility by constructing accessibility ranges (or intervals)
to extend the accessibility of the facility within
the context of mechanism design without money.  
We proposed several (asymptotically) tight group strategyproof and strong group strategyproof deterministic mechanisms that minimize the social cost and the maximum cost of the agents. 
We also provide randomized lower bounds of any randomized strategyproof mechanisms, showing that our deterministic GSP/SGSP mechanisms are reasonable even when considering randomization.

\paragraph{Acknowledgements.}
Hau Chan is supported by the National Institute of General Medical Sciences of the National Institutes of Health [P20GM130461], the Rural Drug Addiction Research Center at the University of Nebraska-Lincoln, and the National Science Foundation under grants IIS:RI \#2302999 and IIS:RI \#2414554. 
Chenhao Wang is supported in part by the Guangdong Provincial Key Laboratory of Interdisciplinary Research and Application for Data Science, UIC, project code 2022B1212010006, by UIC research grant R0400001-22, and by  Artificial Intelligence and Data Science Research Hub, UIC, No. 2020KSYS007 and No. UICR0400025-21. 
 Chenhao Wang is also supported by NSFC under grant 12201049, and by UIC under grants  UICR0400014-22, UICR0200008-23 and UICR0700036-22. The content is solely the responsibility of the authors and does not necessarily represent the official views of the funding agencies.

\bibliographystyle{plain}
\bibliography{mybibfile}

\end{document}